\def\sC{{\mathscr C}}
\def\sI{{\mathscr I}}
\DeclareMathAlphabet{\mathpzc}{OT1}{pzc}{m}{it}
\def\zD{\mathpzc{D}}
\def\zE{\mathpzc{E}}
\def\zG{\mathpzc{G}}
\def\zM{\mathpzc{M}}
\def\zR{\mathpzc{R}}
\def\zS{\mathcal{S}}
\def\zU{\mathpzc{U}}
\def\zV{\mathpzc{V}}
\newcommand{\OO}[1]{O\left( #1 \right)}
\newcommand{\Th}[1]{\Theta\left( #1 \right)}
\newcommand{\Tht}[1]{\tilde{\Theta}\left( #1 \right)}
\newcommand{\OOt}[1]{\tilde{O}\left( #1 \right)}
\newcommand{\Om}[1]{\Omega \left( #1 \right)}
\newcommand{\Omt}[1]{\tilde{\Omega} \left( #1 \right)}
\newcommand{\CS}{T_s}
\newcommand{\CC}{T_c}
\newcommand{\CU}{T_u}
\newcommand{\Ft}{\mathbb{F}_2}
\newcommand{\eqdef}{\mathop{=}\limits^{\triangle}}
\begin{document}
\title{Quantum Information Set Decoding Algorithms}
\author{Ghazal Kachigar \inst{1}  \and Jean-Pierre Tillich \inst{2}}

\institute{Institut de Math\'ematiques de Bordeaux\\ 
Universit\'e de Bordeaux \\
Talence Cedex F-33405, France\\
\email{ghazal.kachigar@u-bordeaux.fr}
\and
Inria,   EPI SECRET\\
2 rue Simone Iff, Paris 75012, France\\
\email{jean-pierre.tillich@inria.fr}
}

\maketitle

\begin{abstract}
The security of code-based cryptosystems such as the McEliece cryptosystem relies primarily on the difficulty of decoding random linear codes. 
The best decoding algorithms are all improvements of an old algorithm due to Prange:  they are known under the name of  information set decoding techniques.
It is also important to assess the security of such cryptosystems against a quantum computer. This research thread started in \cite{OS09} and the
best algorithm to date has been Bernstein's quantising \cite{B10} of the simplest information set decoding algorithm, namely Prange's algorithm.
It consists in applying Grover's quantum search  to obtain a quadratic speed-up of Prange's algorithm.
In this paper, we quantise other information set decoding 
algorithms by using quantum walk techniques which were devised for the subset-sum problem in \cite{BJLM13}.
This results in improving the worst-case complexity of $2^{0.06035n}$ of Bernstein's algorithm to
$2^{0.05869n}$ with the best algorithm presented here (where $n$ is the codelength).
\end{abstract}

\textbf{Keywords:} code-based cryptography, quantum cryptanalysis, decoding algorithm.
\section{Introduction}

As humanity's technological prowess improves, quantum computers have moved from the realm of theoretical constructs to that of objects whose consequences for our other technologies, such as cryptography, must be taken into account. Indeed, currently prevalent public-key cryptosystems such as RSA and ECDH are vulnerable to Shor's algorithm \cite{S97}, which solves factorisation and the discrete logarithm problem in polynomial time. Thus, in order to find a suitable replacement, it has become necessary to study the impact of quantum computers on other candidate cryptosystems. Code-based cryptosystems such as the McEliece \cite{M78} and the Niederreiter \cite{N86} cryptosystems are such possible candidates. 

Their security essentially relies on decoding a linear code. 
Recall that the decoding problem consists, when given a linear code $\sC$ and a noisy codeword $c+e$, in recovering $c$, where $c$ is an unknown codeword of 
$\sC$ and $e$ an unknown error of Hamming weight $w$. A (binary) linear code $\sC$ of dimension $k$ and length $n$ is specified by 
a full rank binary matrix $H$ (i.e. a parity-check matrix) of size $(n-k)\times n$ as 
$$\sC = \{c \in \Ft^n:Hc^T = 0\}.$$
 Since $H(c+e)^T = Hc^T + He^T=He^T$ the decoding problem
can be rephrased as a syndome decoding problem

\begin{problem}[Syndrome Decoding Problem]
Given $H$ and $s^T = He^T$, where $|e| = w$, find $e$.
\end{problem}
This problem has been studied since the Sixties and despite significant efforts on this issue \cite{P62,S88,D91,B97b,MMT11,BLP11,BJMM12,MO15}
the best algorithms for solving this problem \cite{BJMM12,MO15} are exponential in the number of errors that have to be corrected:
correcting $w$ errors in a binary linear code of length $n$ and dimension $k$  has with the aforementioned algorithms a cost  of
$\tilde O(2^{\alpha(\frac{k}{n}, \frac{w}{n})n})$ where $\alpha(R,\omega)$ is positive when $R$ and $\omega$ 
 are both positive.
All these algorithms use in a crucial way the original idea due to Prange \cite{P62} and are known under the name
of Information Set Decoding (ISD) algorithms: they all take advantage of the fact that there might exist a rather large set of positions
containing an information set of the code\footnote{An information set of a linear code $C$ of dimension $k$ is a set $\sI$ of $k$ positions such that 
when given $\{c_i:i \in \sI\}$ the codeword $c$  of $C$ is specified entirely.}
that is almost error free.

All the efforts that have been spent on this problem have only managed to decrease slightly this exponent $\alpha(R,\omega)$.
The following table gives an overview of the average time complexity of currently existing classical algorithms 
when $w$ is the Gilbert-Varshamov distance $d_{\text{GV}}(n,k)$ of the code. 
This quantity is defined by $d_{\text{GV}}(n,k) \eqdef n H_2^{-1}\left(1 - \frac{k}{n}\right)$ where $H_2$ is the binary entropy function $H_2(x) \eqdef -x \log_2(x)-(1-x)\log_2(1-x)$ and $H_2^{-1}$ its inverse
defined from $[0,1]$ to $[0,\frac{1}{2}]$. It corresponds to the largest distance for which we may still expect a unique solution to the decoding problem. If we want uniqueness of the solution, it can therefore be 
considered as the hardest instance of decoding.
In the following table, $\omega_{\text{GV}}$ is defined by the ratio $\omega_{\text{GV}} \eqdef d_{\text{GV}}(n,k)/n$.
\begin{center}
      \begin{tabular}{| c | c | c |}
      \hline
      \textbf{Author(s)} & Year & $\underset{0 \leq R \leq 1}{\max}\alpha(R,\omega_{\text{GV}})$ to 4 dec. places\\
      \hline
      Prange \cite{P62} & 1962 & 0.1207 \\
      \hline
      Dumer \cite{D91} & 1991 & 0.1164 \\
      \hline
      MMT \cite{MMT11} & 2011 & 0.1114 \\
      \hline
      BJMM \cite{BJMM12} & 2012 & 0.1019\\
      \hline
      MO \cite{MO15}& 2015 &  0.0966\\
      \hline
      \end{tabular}\\
      \end{center}
The question of using quantum algorithms to speed up ISD decoding algorithms was first put forward  in \cite{OS09}.
However, the way Grover's algorithm was used in \cite[Subsec. 3.5]{OS09} to speed up decoding did not allow for significant improvements over classical ISD algorithms.
Later on, it was shown  by Bernstein in \cite{B10} that it is possible to obtain much better speedups with Grover's algorithm: by using it for finding an error-free information set, the exponent of Prange's algorithm can indeed be halved. 

This paper builds upon this way of using Grover's search algorithm, as well as the quantum algorithms developped by Bernstein, Jeffery, Lange and Meurer in \cite{BJLM13} to solve the subset sum problem more efficiently.
The following table summarises the ingredients and average time complexity of the  algorithm of \cite{B10} and the new quantum algorithms presented in this paper.
\begin{center}
      \begin{tabular}{| c | c | c | c |}
      \hline
      \textbf{Author(s)} & Year & Ingredients & $\underset{0 \leq R \leq 1}{\max}\alpha(R,\omega_{\text{GV}})$ to 5 dec. places\\
      \hline
      Bernstein \cite{B10} & 2010 & Prange+Grover & 0.06035 \\
      \hline
      This paper & 2017 & Shamir-Schroeppel+Grover+QuantumWalk & 0.05970 \\
      \hline
      This paper & 2017 & MMT+``1+1=0''+Grover+QuantumWalk & 0.05869 \\
      \hline
      \end{tabular}\\
  \end{center}
A quick calculation shows that the complexity exponent of our best quantum algorithm, $MMTQW$, fulfils $\alpha_{\text{MMTQW}} \approx \frac{\alpha_{\text{Dumer}}}{2} + 4.9 \times 10^{-4}$. Thus, our best quantum algorithm improves in a small but non-trivial way on \cite{B10}. Several reasons will be given throughout this paper on why it has been difficult to do better than this.

\noindent
{\bf Notation.} Throughout the paper, we denote by $|e|$ the Hamming weight of a vector $e$. We use the same notation for denoting the cardinality of a set, i.e. $|\zS|$ denotes the cardinality of the set $\zS$. 
The meaning of this notation  will be clear from the context
and we will use calligraphic letters to denote sets: $\zS,\sI,\zM,\dots$. 
We use the standard $\OO{}$, $\Om{}$, $\Th{}$ notation and use the less standard 
$\OOt{}$, $\Omt{}$, $\Tht{}$ notation to mean ``$\OO{}$, $\Om{}$, $\Th{}$, when we ignore logarithmic factors''. 
Here all the quantities we are interested in are functions of the codelength $n$ and we write 
$f(n) = \OOt{g(n)}$ for instance, when there exists a constant $k$ such such that 
$f(n) = \OO{g(n) \log^k(g(n))}$.

\section{Quantum search algorithms}
\subsection{Grover search}
Grover's search algorithm \cite{G96a,G97} is, along with its generalisation \cite{BBHT98} which is used in this paper, an optimal algorithm for solving the following problem with a quadratic speed-up compared to the best-possible classical algorithm.
\begin{problem}[Unstructured search problem]
Given a set $\zE$ and a function $f : \zE \rightarrow \{0,1\}$, find an $x \in \zE$ such that $f(x) = 1$.
\end{problem}
In other words, we need to find an element that fulfils a certain property, and $f$ is an oracle for deciding whether it does. Moreover, in the new results presented in this paper, $f$ will be a quantum algorithm.
If we denote by $\varepsilon$ the proportion of elements $x$ of $\zE$ such that $f(x) = 1$, Grover's algorithm solves the problem above using $O(\frac{1}{\sqrt{\varepsilon}})$ queries to $f$, whereas in the classical setting this cannot be done with less than $O(\frac{1}{\varepsilon})$ queries.
Furthermore, if the algorithm $f$ executes in time $T_f$ on average, the average time complexity of Grover's algorithm will be $O(\frac{T_f}{\sqrt{\varepsilon}})$.
\subsection{Quantum Walk}
\subsubsection{Random Walk.}
Unstructured search problems as well as search problems with slightly more but still minimal structure may be recast as graph search problems.
\begin{problem}[Graph search problem]
Given a graph $G=(\zV,\zE)$ and a set of vertices $\zM \subset \zV$, called the set of \textit{marked elements}, find an $x \in \zM$.
\end{problem}
The graph search problem may then be solved using random walks (discrete-time Markov chains) on the vertices of the graph.
From now on, we will take the graph to be undirected, connected, and $d$-regular, i.e. such that each vertex has exactly $d$ neighbours.
\par{\em Markov chain.}
A Markov chain is given by an initial probability distribution $v$ and a stochastic transition matrix $M$. The transition matrix of a random walk on a graph (as specified above) is obtained from the graph's adjacency matrix $A$ by
$M=\frac{1}{d}A$.
\par{\em Eigenvalues and the spectral gap.}
A closer look at the eigenvalues and the eigenvectors of $M$ is needed in order to analyse the complexity of a random walk on a graph.
The eigenvalues will be noted $\lambda_i$ and the corresponding eigenvectors $v_i$. We will admit the following points (see \cite{CDS80}):\\
(i) all the eigenvalues lie in the interval $[-1,1]$;\\
 (ii) $1$ is always an eigenvalue, the corresponding eigenspace is of dimension $1$;\\
(iii) there is a corresponding eigenvector which is also a probability distribution (namely the uniform distribution $u$ over the vertices). 
It is the unique stationary distribution of the random walk.\\
  We will suppose that the eigenvalues are ordered from largest to smallest, so that $\lambda_1=1$ and $v_1=u$.
An important value associated with the transition matrix of a Markov chain is its \textit{spectral gap}, defined as 
$\delta \eqdef  1 - \max_{i=2,...,d}|\lambda_i|$.
Such a random walk on an undirected regular graph is always {\em reversible} and it is also {\em irreducible} because we have assumed that the graph is 
connected. The random walk is {\em aperiodic} in such a case if and only if the spectral gap $\delta$ is positive. In such a case,  a long enough random walk in the graph converges to the uniform distribution since for all $\eta>0$, we have $||M^kv - u|| < \eta$ for $k=\tilde O(1/\delta)$, where $v$ is the initial probability distribution.

Finding a marked element by running a Markov chain on the graph just consists in 

 \begin{algorithm}[H]
  \DontPrintSemicolon
  \KwIn{$G = (\zE,\zV)$, $\zM \subset \zV$, initial probability distribution $v$}
  \KwOut{An element $e \in \zM$}
   \textsc{Setup :} Sample a vertex $x$ according to $v$ and initialise the data structure.\;
   \Repeat{
     \textsc{Check :} \eIf{current vertex $x$ is marked}{
     \KwRet $x$\; 
   }{
     \RepeatU{$x$ is sampled according to a distribution close enough to the uniform distribution}{
       \textsc{Update :} \emph{Take one step of the random walk and update data structure accordingly.}\;
       }
     }
   }

  \caption{$RandomWalk$}
 \end{algorithm}

Let $\CS$ be the cost of \textsc{Setup}, $\CC$ be the cost of \textsc{Check} and $\CU$ be the cost of \textsc{Update}. 
It follows from the preceding considerations that $\tilde O(1/\delta)$ steps of the random walk are sufficient to sample $x$ according to the uniform distribution.
Furthermore, if we note $\varepsilon := \frac{|\zM|}{|\zV|}$ the proportion of marked elements, it is readily seen that the algorithm ends after $O(1/\varepsilon)$ iterations of the outer loop.
Thus the complexity of classical random walk is $\CS + \frac{1}{\varepsilon}\left(\CC + \frac{1}{\delta}\CU\right)$.

Several quantum versions of random walk algorithms have been proposed by many authors, notably Ambainis \cite{A07}, Szegedy \cite{S04}, and Magniez, Nayak, Roland and Santha \cite{MNRS07}. A survey of these results can be found in \cite{S08}. We use here the following result
\begin{theorem}[\cite{MNRS07}]
\label{th:quantumwalk}
Let $M$ be an aperiodic, irreducible and reversible Markov chain on a graph with spectral gap $\delta$, and $\varepsilon := \frac{|\zM|}{|\zV|}$ as above. Then there is a quantum walk algorithm that finds an element in $\zM$ with cost
\begin{equation}\label{eq:complexity_quantum_walk}
\boxed{\CS + \frac{1}{\sqrt\varepsilon}\left(\CC + \frac{1}{\sqrt\delta}\CU\right)}
\end{equation}
\end{theorem}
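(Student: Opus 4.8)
The plan is to quantise the classical $RandomWalk$ procedure in the Szegedy--MNRS style, replacing each of the three classical costs by a quantum counterpart so that the two loop factors $\frac{1}{\varepsilon}$ and $\frac{1}{\delta}$ each pick up a square root. First I would set up the quantum data by defining, for each vertex $x$, the normalised diffusion state $|p_x\rangle \eqdef \sum_{y} \sqrt{M_{yx}}\,|y\rangle$ over its neighbours (normalised since $M$ is stochastic), and the stationary superposition $|\pi\rangle \eqdef \sum_{x} \sqrt{u_x}\,|x\rangle\,|p_x\rangle$, where $u$ is the uniform stationary distribution. This state plays the role of the classical \textsc{Setup} that samples from $u$, and I would charge its preparation the cost $\CS$. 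The quantum walk operator $W$ is then assembled as a product of two reflections, about the subspace spanned by the $|x\rangle|p_x\rangle$ and about its swapped image.

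The spectral heart of the argument is Szegedy's correspondence: the eigenvalues of $W$ are $e^{\pm 2i\theta_j}$, where the $\cos\theta_j$ are the singular values of the discriminant matrix attached to $M$. Because $M$ is reversible, irreducible and aperiodic with spectral gap $\delta$, the state $|\pi\rangle$ is the unique eigenvector of $W$ of eigenvalue $1$ (phase $0$), while every other eigenphase is bounded away from $0$ by $\Om{\sqrt\delta}$. This quadratic amplification of the gap, from $\delta$ to $\sqrt\delta$, is exactly what will convert the classical mixing factor $\frac{1}{\delta}$ into $\frac{1}{\sqrt\delta}$.

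Next I would wrap this in an amplitude-amplification routine whose good subspace is that of the marked vertices. Since $|\pi\rangle$ has squared overlap $\varepsilon$ with the marked subspace, $\OO{1/\sqrt\varepsilon}$ iterations suffice, each alternating a reflection about the marked set (cost $\CC$, the quantum analogue of \textsc{Check}) with a reflection about $|\pi\rangle$. The latter is not available directly; I would realise it approximately by running phase estimation on $W$ to precision $\Th{\sqrt\delta}$, flipping the sign on the branch whose estimated phase is nonzero, and then uncomputing. By the phase-gap bound this separates $|\pi\rangle$ from the rest of the spectrum and costs $\OOt{1/\sqrt\delta}$ applications of $W$, i.e. $\OOt{\CU/\sqrt\delta}$, since one application of $W$ is one quantum \textsc{Update} step. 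Paying $\CS$ once for the initial state and multiplying the per-iteration cost $\CC + \frac{1}{\sqrt\delta}\CU$ by the $\frac{1}{\sqrt\varepsilon}$ iterations yields the claimed bound \eqref{eq:complexity_quantum_walk}.

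The main obstacle I expect is the error analysis of the approximate reflection about $|\pi\rangle$: phase estimation to finite precision yields only an imperfect reflection, and one must show that the resulting per-step deviation is small enough (inverse-polynomially in the relevant parameters) that it does not accumulate destructively over the $\OO{1/\sqrt\varepsilon}$ amplitude-amplification rounds, so that the routine still outputs a marked element with constant probability. Controlling this accumulation is the delicate part; it is handled in \cite{MNRS07} by an operator-norm bound on the gap between the ideal and the phase-estimation reflections, and it is precisely why the $\OOt{\cdot}$ (rather than $\OO{\cdot}$) accounting is needed in the \textsc{Update} term.
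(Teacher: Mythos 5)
The paper does not prove this theorem at all: it is imported verbatim from \cite{MNRS07}, so there is no internal proof to compare against. Your sketch is a faithful outline of the actual MNRS argument from that source --- Szegedy's bipartite walk operator built from two reflections, the quadratic phase-gap amplification from $\delta$ to $\Om{\sqrt\delta}$ for reversible chains, the approximate reflection about the stationary superposition via phase estimation at cost $\OOt{\CU/\sqrt{\delta}}$ inside $\OO{1/\sqrt{\varepsilon}}$ rounds of amplitude amplification, and the operator-norm control of the accumulated reflection error --- so it is correct and takes essentially the same route as the cited proof.
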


\subsubsection{Johnson graphs and product graphs.}
With the exception of Grover's search algorithm seen as a quantum walk algorithm, to date an overwhelming majority of quantum walk algorithms are based on Johnson graphs or a variant thereof. The decoding algorithms which shall be presented in this paper rely on cartesian products of Johnson graphs. All of these objects are defined in this section and some important properties are mentioned.
\begin{definition}[Johnson graphs]
\label{def_johnson_graph}
A Johnson graph $J(n,r)$ is an undirected graph whose vertices are the subsets containing $r$ elements of a set of size $n$, with an edge between two vertices $S$ and $S'$ iff $|S \cap S'| = r-1$. In other words, $S$ is adjacent to $S'$ if $S'$ can be obtained from $S$ by removing an element and adding a new element in its place.
\end{definition}
It is clear that $J(n,r)$ has $\binom{n}{r}$ vertices and is $r(n-r)$-regular. Its spectral gap is given by
\begin{equation}
\label{eq:spectral_gap_johnson}
\delta = \frac{n}{r(n-r)}.
\end{equation}
\begin{definition}[Cartesian product of graphs]
\label{def_product_graphs}
Let $G_1 = (\zV_1, \zE_1)$ and $G_2 = (\zV_2,\zE_2)$ be two graphs. Their cartesian product $G_1 \times G_2$ is the graph $G = (\zV,\zE)$ where:
\begin{enumerate}
  \item $\zV = \zV_1 \times \zV_2$, i.e. $\zV = \{v_1v_2~|~v_1 \in \zV_1, v_2 \in \zV_2\}$
  \item $\zE = \{(u_1u_2,v_1v_2)~|~ (u_1 = v_1 \wedge (u_2,v_2) \in \zE_2) \vee ((u_1,v_1) \in \zE_1 \wedge u_2 = v_2)\}$
\end{enumerate}
\end{definition}
The spectral gap of products of Johnson graphs is given by
\begin{restatable}[Cartesian product of Johnson graphs]{theorem}{thmProdJohnsonGraph}
\label{thm:product_johnsongraphs}
Let $J(n,r) = (\zV,\zE)$, $m \in \mathbb{N}$ and $J^m(n,r) := \times_{i=1}^m J(n,r) = (\zV_m,\zE_m)$. Then:
\begin{enumerate}
  \item $J^m(n,r)$ has $\binom{n}{r}^m$ vertices and is $md$-regular where $d = r(n-r)$.
  \item We will write $\delta(J)$ resp. $\delta(J^m)$ for the spectral gaps of $J(n,r)$ resp. $J^m(n,r)$. Then:\\
$\delta(J^m) \geq \frac{1}{m}\delta(J).$
 \item The random walk associated with $J^m(n,r)$ is  aperiodic, irreducible and reversible for all positive $m$, $n$ and $r<n$.
\end{enumerate}
\end{restatable}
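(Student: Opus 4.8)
The plan is to treat the three assertions in order, with claim~2 (the spectral-gap bound) as the real content and claims~1 and~3 as short structural observations. For claim~1 I would induct on $m$ using Definition~\ref{def_product_graphs}: the vertex set of a cartesian product is the product of the vertex sets, so $|\zV_m| = \binom{n}{r}^m$ is immediate, and the edge rule shows that a vertex $(u_1,u_2)$ of $G_1\times G_2$ has exactly $\deg_{G_1}(u_1)+\deg_{G_2}(u_2)$ neighbours; iterating over the $m$ copies of the $d$-regular graph $J(n,r)$ gives degree $md$ with $d=r(n-r)$. For claim~3, reversibility is automatic since a product of undirected graphs is undirected and the graph is regular by claim~1; irreducibility follows from connectedness, and here I would invoke the standard fact that a cartesian product is connected if and only if each factor is, together with the connectedness of $J(n,r)$ for $r<n$; finally aperiodicity is equivalent to $\delta(J^m)>0$ by the discussion preceding the theorem, which is delivered by claim~2 once we know $\delta(J)=\tfrac{n}{r(n-r)}>0$.

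The heart is claim~2. The first step I would carry out is to pin down how eigenvalues behave under the cartesian product. Writing $A$ for the adjacency matrix of $J(n,r)$, the adjacency matrix of a product is $A_1\otimes I + I\otimes A_2$, so for the $m$-fold product $A(J^m)=\sum_{k=1}^m I^{\otimes(k-1)}\otimes A\otimes I^{\otimes(m-k)}$. Since $M(J^m)=\tfrac{1}{md}A(J^m)$ and the summands commute and are simultaneously diagonalisable, the eigenvalues of $M(J^m)$ are exactly the averages $\tfrac{1}{m}\sum_{k=1}^m \lambda_{i_k}$, where each $\lambda_{i_k}$ runs over the eigenvalues $\lambda_1=1\geq\lambda_2\geq\cdots$ of the single-graph transition matrix $M(J)$, and the top value $1$ occurs only when every $\lambda_{i_k}=1$.

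The second step is the bound itself. Let $\mu=\tfrac{1}{m}\sum_k\lambda_{i_k}$ be any eigenvalue of $M(J^m)$ other than the top one, so at least one index has $i_k\geq 2$. For those indices the single-graph gap formula gives $|\lambda_{i_k}|\leq 1-\delta(J)$, while the rest satisfy $|\lambda_{i_k}|\leq 1$; if $t\geq 1$ of the indices have $i_k\geq 2$, the triangle inequality yields
\[
|\mu|\leq \frac{1}{m}\bigl[(m-t)+t(1-\delta(J))\bigr]=1-\frac{t\,\delta(J)}{m}\leq 1-\frac{\delta(J)}{m}.
\]
Maximising over all non-top eigenvalues then gives $\delta(J^m)=1-\max_{\mu\neq 1}|\mu|\geq \tfrac{\delta(J)}{m}$, which is the claim.

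The main obstacle I anticipate is the bookkeeping forced by the absolute values: the spectral gap is defined through $\max_{i\geq 2}|\lambda_i|$ rather than the second-largest eigenvalue, so one cannot simply read it off from the largest non-top average $1-\tfrac{\delta(J)}{m}$ and must also control the most negative eigenvalue. The triangle-inequality step above is exactly what resolves this, and it relies crucially on the fact that the single-graph formula $\delta(J)=\tfrac{n}{r(n-r)}$ already bounds \emph{all} non-top eigenvalues of $M(J)$ in absolute value, not merely $\lambda_2$; this is what lets us conclude $|\lambda_{i_k}|\leq 1-\delta(J)$ uniformly over $i_k\geq 2$.
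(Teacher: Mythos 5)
Your proof is correct, and it reaches claim~2 by a genuinely more direct route than the paper. The paper first proves a standalone two-factor lemma (Theorem~\ref{thm:spectralgap_productgraphs}): for regular graphs of degrees $d_1,d_2$ the product satisfies $\delta(G_1\times G_2)\geq \min(\delta_1 d_1,\delta_2 d_2)/(d_1+d_2)$, established via the eigenvalue formula $\nu_{i,j}=(d_1\lambda_i+d_2\mu_j)/(d_1+d_2)$ from \cite{CDS80} together with explicit casework separating the second-largest and most negative eigenvalues; the bound $\delta(J^m)\geq \frac{1}{m}\delta(J)$ then follows by induction on $m$. You instead diagonalise the $m$-fold Kronecker sum $A(J^m)=\sum_{k}I^{\otimes(k-1)}\otimes A\otimes I^{\otimes(m-k)}$ in one step, read off that the eigenvalues of $M(J^m)$ are the exact averages $\frac{1}{m}\sum_k\lambda_{i_k}$, and dispatch the absolute-value bookkeeping with a single triangle inequality. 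Both arguments rest on the same underlying spectral fact about cartesian products; the trade-off is that your argument exploits the homogeneity of the situation (all $m$ factors are identical copies of $J(n,r)$, with the same degree and spectrum), which is all the theorem needs and which even yields the slightly stronger estimate $1-|\mu|\geq t\,\delta(J)/m$ when $t$ coordinates are non-trivial, whereas the paper's lemma applies to products of distinct regular graphs and so is more general, at the price of the induction and the casework. One step you assert and should keep explicit, since the whole argument hinges on it: the top eigenvalue of $M(J^m)$ is attained only by the all-ones tuple, which holds because each $\lambda_{i_k}\leq 1$ forces every coordinate to equal $1$ in an average equal to $1$, and the eigenvalue $1$ of $M(J)$ has a one-dimensional eigenspace (point (ii) of the facts admitted in Section~2); this is exactly what licenses the assumption $i_k\geq 2$ for some $k$ in every non-top eigenvalue. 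Your treatments of claims~1 and~3 coincide with the paper's, with your appeal to the fact that a cartesian product is connected if and only if each factor is being a small detail the paper leaves implicit.
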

For a proof of this statement, see the appendix.

\section{Generalities on classical and quantum decoding}
\label{sec:classical_quantum_decoding}
We first recall how the simplest ISD algorithm \cite{P62} and its quantised version \cite{B10} work and 
then give a skeleton of the structure of more sophisticated classical and quantum versions.
\subsection{Prange's algorithm and Bernstein's algorithm}
Recall that the goal is to find $e$ of weight $w$ given $s^T = He^T$, where $H$ is an $(n-k) \times n$ matrix. In other words, the problem we aim to solve is finding a solution to an underdetermined linear system of $n-k$ equations in $n$ variables and the  solution is unique owing to the weight condition. Prange's algorithm is based on the following observation: if it is known that $k$ given components of the error vector are zero, the error positions are among the $n-k$ remaining components. In other words, if we know for sure that the $k$ corresponding variables are not involved in the linear system, then the error vector can be found by solving the resulting linear system of $n-k$ equations in $n-k$ variables in polynomial time.

The hard part is finding a correct size-$k$ set (of indices of the components). Prange's algorithm 
samples such sets and solves the resulting linear equation until an error vector of weight $w$ is found.
The probability for finding such a set is of order 
$\Om{\frac{\binom{n-k}{w}}{\binom{n}{w}}}$ and therefore
Prange's algorithm has  
complexity 
$$\OO{ \frac{\binom{n}{w}}{\binom{n-k}{w}}}=\OOt{2^{\alpha_{\text{Prange}}(R,\omega)n}}$$
where
$$\alpha_{\text{Prange}}(R,\omega) = H_2(\omega) - (1-R)H_2\left(\frac{\omega}{1-R}\right)$$
by using the well known formula for binomials
$$
\binom{n}{w} = \Tht{2^{H_2\left( \frac{w}{n} \right)n}}.
$$
Bernstein's algorithm consists in using Grover's algorithm to find a correct size-$k$ set. Indeed, an oracle for checking that a size-$k$ set is correct can be obtained by following the same steps as in Prange's algorithm, i.e. deriving and solving a linear system of $n-k$ equations in $n-k$ variables and returning 1 iff the resulting error vector has weight $w$.
Thus the complexity of Bernstein's algorithm is the square root of that of Prange's algorithm, i.e. $\alpha_{\text{Bernstein}} = \frac{\alpha_{\text{Prange}}}{2}$.

\subsection{Generalised ISD algorithms}
More sophisticated classical ISD algorithms \cite{S88,D91,FS09,BLP11,MMT11,BJMM12,MO15} generalise Prange's algorithm in the following way: they introduce a new parameter $p$ and allow $p$ error positions 
inside of the size-$k$ set (henceforth denoted by $\zS$). Furthermore, from Dumer's algorithm onwards, a new parameter $\ell$ is introduced and the set $\zS$ is taken to be of size $k+\ell$.
This event happens with probability $P_{\ell,p} \eqdef \frac{\binom{k+\ell}{p}\binom{n-k-\ell}{w-p}}{\binom{n}{w}}.$
The point is that
\begin{restatable}{proposition}{propPuncturing}
\label{prop:puncturing}
Assume that the restriction of $H$ to the columns belonging to the complement of $\zS$ is a matrix of full rank, then
\begin{itemize}
\item[(i)]
the restriction $e'$ of the error to $\zS$  is a solution to the syndrome decoding problem
 \begin{equation}\label{eq:subproblem}
H' {e'}^T = {s'}^T.
\end{equation} 
with $H'$ being an $\ell \times (k+\ell)$ binary matrix, $|e'|=p$ and $H'$, $s'$ that can be computed in polynomial time from $\zS$, $H$ and $s$;\\
\item[(ii)] once we have such an $e'$,  there is a unique $e$ whose restriction to $\zS$ is equal to $e'$ and which satisfies
$He^T = s^T$. Such an $e$ can be computed from $e'$ in polynomial time.
\end{itemize}
\end{restatable}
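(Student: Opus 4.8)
The plan is to prove both parts at once by reducing $H$, after a suitable column permutation, to a partially systematic form via Gaussian elimination; the full-rank hypothesis is precisely what makes this reduction possible, and everything else follows by reading off block equations.

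First I would fix a permutation $\pi$ of the $n$ coordinates sending the $n-k-\ell$ positions outside $\zS$ to the last columns, with associated permutation matrix $P$. Writing $\hat H = HP = [A \mid B]$, the block $A$ is the $(n-k)\times(k+\ell)$ submatrix indexed by $\zS$ and $B$ the $(n-k)\times(n-k-\ell)$ submatrix indexed by its complement. Since $B$ has full rank $n-k-\ell$ by hypothesis, a polynomial-time Gaussian elimination yields an invertible $(n-k)\times(n-k)$ matrix $U$ (a product of elementary row operations) with
$$U\hat H = \begin{pmatrix} H' & 0 \\ H'' & I_{n-k-\ell} \end{pmatrix},$$
where the $n-k$ rows are split as $\ell + (n-k-\ell)$, so that $H'$ is $\ell\times(k+\ell)$ and the lower-right block is the identity while the block above it is cleared.

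Next I would track what the syndrome equation becomes. Setting $\hat e = eP = (e',f)$, where $e'$ is the restriction of $e$ to $\zS$ (of weight $p$) and $f$ its restriction to the complement, and writing $Us^T = \bigl(\begin{smallmatrix} s'^T \\ s''^T \end{smallmatrix}\bigr)$ with $s'$ of length $\ell$, the identity $He^T=s^T$ is equivalent to $U\hat H \hat e^T = Us^T$, which splits into
$$H'e'^T = s'^T \qquad\text{and}\qquad H''e'^T + f^T = s''^T.$$
The first equation is exactly subproblem \eqref{eq:subproblem}, with $H'$ and $s'$ read off from $U\hat H$ and $Us^T$; since $P$, $U$, $H'$ and $s'$ are all produced by polynomial-time linear algebra, this establishes (i). For (ii), the second equation rearranges over $\Ft$ to $f^T = s''^T + H''e'^T$, which determines $f$, and hence $\hat e = (e',f)$, uniquely from $e'$. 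As $U$ is invertible and $P$ a permutation, $e = \hat e P^{-1}$ is then the unique vector restricting to $e'$ on $\zS$ and satisfying $He^T=s^T$; it is obtained from $e'$ by one matrix-vector product and an un-permutation, hence in polynomial time.

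The only step carrying real content is the construction of $U$: this is where full rank of the complement block $B$ is indispensable, since without it the lower-right identity cannot be formed and the system need not decouple into the two independent blocks above. The remainder is bookkeeping on how $P$ and $U$ act on $e$ and $s$.
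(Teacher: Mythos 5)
Your proof is correct and follows essentially the same route as the paper's: Gaussian elimination using the full-rank hypothesis on the complement block to reach the partially systematic form $\bigl(\begin{smallmatrix} H' & 0 \\ H'' & I \end{smallmatrix}\bigr)$, then reading off the two block equations to obtain \eqref{eq:subproblem} and the unique completion $e'' = s'' + H''e'^T$ of $e'$. The only difference is that you carry an explicit permutation matrix $P$ where the paper simply assumes without loss of generality that $\zS$ occupies the first $k+\ell$ positions, and you spell out the uniqueness argument in (ii) slightly more fully.
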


\noindent
{\em Remark:} The condition in this proposition is met with very large probability when $H$ is chosen uniformly at random:
it fails to hold with probability which is only $O(2^{-\ell})$.

\begin{proof}
Without loss 
of generality assume that $\zS$ is given by the $k+\ell$ first positions.  By performing Gaussian elimination, we look for a square matrix $U$ such that 
$$
U H = \begin{pmatrix} H' & 0_\ell \\
H" & I_{n-k-\ell}\end{pmatrix}
$$
That such a matrix exists is a consequence of the fact that $H$ restricted to the last $n-k-\ell$ positions is of full rank.
Write now $e=(e',e")$ where $e'$ is the word formed by the $k+\ell$ first entries of $e$. Then
$$Us^T=UHe^T = \begin{pmatrix} H'{e'}^T \\ H"{e'}^T + {e"}^T \end{pmatrix}.$$
If we write $Us^T$ as $(s',s")^T$, where $s'^T$ is the vector formed by the $\ell$ first entries 
of $Us^T$, then we recover $e$ from $e'$
by using the fact that $H"{e'}^T + {e"}^T={s"}^T$. $~\qed$
\end{proof}
 From now on, we denote by $\Sigma$ and $h$ the functions that can be computed in polynomial time that are promised by this proposition, i.e.
\begin{eqnarray*}
s ' & = & \Sigma(s,H,\zS)\\
e & = & h(e')
\end{eqnarray*}

In other words, all these algorithms  solve in a first step a new instance of the syndrome decoding problem with different parameters. The difference
with the original problem is that if $\ell$ is small, which is the case in general, there is not a single solution anymore. 
However searching for all (or a large set of them) can be done more efficiently than just brute-forcing over all errors of weight $p$ on the set $\zS$.
Once a possible solution $e'$ to \eqref{eq:subproblem} is found, $e$ is recovered as explained before. 
The main idea which avoids brute forcing over all possible errors of weight $p$ on $\zS$ is to obtain candidates $e'$ by solving an instance of a
generalised $k$-sum problem that we define as follows.
\begin{problem}[generalised $k$-sum problem]
Consider an Abelian group $\zG$, an arbitrary set $\zE$, a map $f$ from $\zE$ to $\zG$, $k$ subsets $\zV_0$, $\zV_1$, \dots, $\zV_{k-1}$ of $\zE$, another map $g$ from $\zE^k$ to $\{0,1\}$,  and an element $S \in \zG$.
 Find
a  solution $(v_0,\dots,v_{k-1}) \in \zV_0\times \dots \zV_{k-1}$ such that we have at the same time
\begin{itemize}
\item[(i)] $f(v_0) + f(v_1) \dots + f(v_{k-1}) = S$ (subset-sum condition);
\item[(ii)] $g(v_0,\dots,v_{k-1})  =  0$   $((v_0,\dots,v_{k-1})$ is a root of $g)$.
\end{itemize}
\end{problem}

Dumer's ISD algorithm, for instance, solves the $2$-sum problem in the case where
\begin{eqnarray*}
\zG & = &\Ft^\ell, \;\;\zE = \Ft^{k+\ell},\;\;f(v)  =  H'{v}^T\\
\zV_0 &= &\{(e_0,0_{(k+\ell)/2})\in \Ft^{k+\ell} : e_0 \in \Ft^{(k+\ell)/2},\; |e_0|=p/2\}  \\
\zV_1 &= &\{(0_{(k+\ell)/2},e_1)\in \Ft^{k+\ell} : e_1 \in \Ft^{(k+\ell)/2},\; |e_1|=p/2\} 
\end{eqnarray*}
and $g(v_0,v_1)=0$ if and only if $e=h(e')$ is of weight $w$ where $e'=v_0 + v_1$.
 A solution to the $2$-sum problem is then clearly a solution to the decoding problem by construction.
The point is that the $2$-sum problem can be solved in time which is much less than $|\zV_0|\cdot|\zV_1|$. For instance,
this can clearly be achieved in expected time $|\zV_0|+|\zV_1|+\frac{|\zV_0|\cdot|\zV_1|}{|\zG|}$
and space $|\zG|$ by storing the elements $v_0$ of $\zV_0$ in a hashtable at the address $f(v_0)$ and then going over all elements $v_1$ of the other set 
to check whether or not the address $S-f(v_1)$ contains an element. The term $\frac{|\zV_0|\cdot|\zV_1|}{|\zG|}$ accounts for the expected number of solutions 
of the $2$-sum problem when the elements of $\zV_0$ and $\zV_1$ are chosen uniformly at random in $\zE$ (which is the assumption what we are going to make from on).
This is precisely what Dumer's algorithm does. Generally, the size of $\zG$ is chosen such that $|\zG|=\Th{|\zV_i|}$ and the space and time complexity are
also of this order.

Generalised ISD algorithms are thus composed of a loop in which first a set $\zS$ is sampled and then an error vector having a certain form, namely with $p$ error positions in $\zS$ and $w-p$ error positions outside of $\zS$, is sought. Thus, 
for each ISD algorithm $A$, we will denote by $Search_A$ the algorithm
 whose exact implementation depends on $A$ but whose specification is always\\
$Search_A : \zS, H, s, w, p \rightarrow \{ e ~|~ \text{$e$ has weight $p$ on $\zS$ and weight $w-p$ on $\overline{\zS}$ and } s^T = He^T\} \cup \{NULL\}$, 
 where $\zS$ is a set of indices, $H$ is the parity-check matrix of the code and $s$ is the syndrome of the error we are looking for.
The following pseudo-code gives the structure of a generalised ISD algorithm.\\
\begin{algorithm}[H]
 \DontPrintSemicolon
 \KwIn{$H$, $s$, $w$, $p$}
 \KwOut{$e$ of weight $w$ such that $s^T = He^T$}
 \RepeatU{$|e| = w$}{
  Sample a set of indices $\zS \subset \{1, ...,n\}$\;
  $e \leftarrow Search_A(\zS, H, s, w, p)$\;
 }
 \KwRet $e$\;
 \caption{ISD\_Skeleton}
\end{algorithm}
$~$
Thus, if we note $P_A$ the probability, dependent on the algorithm $A$, that the sampled set $\zS$ is correct and that $A$ finds $e$
\footnote{In the case of Dumer's algorithm, for instance, even if the restriction of $e$ to $\zS$ is of weight $p$, Dumer's algorithm
may fail to find it since it does not split evenly on both sides of the bipartition of $\zS$.}, and $T_A$ the execution time of the algorithm $Search_A$, the complexity of generalised ISD algorithms is 
$\OO{\frac{T_A}{P_A}}.$
To construct generalised quantum ISD algorithms, we use Bernstein's idea of using Grover search to look for a correct set $\zS$. 
However, now each query made by Grover search will take time which is essentially the time complexity of $Search_A$.
Consequently, the complexity of generalised quantum ISD algorithms is given by the following formula:
\begin{equation}
\label{eq:T_quant_ISD}
\OO{\frac{T_A}{\sqrt{P_A}}} = \OO{\sqrt{\frac{T_A^2}{P_A}}}.
\end{equation}
An immediate consequence of this formula is that, in order to halve the complexity exponent of a given classical algorithm, we need a quantum algorithm whose search subroutine is ``twice'' as efficient.

{\tiny }\section{Solving the generalised $4$-sum problem with quantum walks and Grover search}

\subsection{The Shamir-Schroeppel idea}
As explained in Section \ref{sec:classical_quantum_decoding}, the more sophisticated ISD algorithms solve during the inner step 
an instance of the generalised $k$-sum problem.
The issue is to get a good quantum version of the classical algorithms used to solve this problem. That this task is non trivial can already be
guessed from Dumer's algorithm. Recall that it solves the generalised $2$-sum problem in time and space complexity $\OO{V}$ when 
$V=|\zV_0|=|\zV_1|=\Theta(|\zG|)$.
The problem is that if we wanted a quadratic speedup when compared to the classical Dumer algorithm,
then this would require a quantum algorithm solving the same problem in time $\OO{V^{1/2}}$, but this seems problematic since naive ways of quantising this algorithm 
stumble on the problem that the space complexity is a lower bound on the time complexity of the quantum algorithm.
This strongly motivates the choice of ways of solving the $2$-sum problem by using less memory. This can be done through the idea of Shamir and Schroeppel \cite{SS81}.
Note that the very same idea is also used for the same reason to 
speed up quantum algorithms for the subset sum problem in \cite[Sec. 4]{BJLM13}.
To explain the idea, suppose that $\zG$ factorises as $\zG = \zG_0 \times \zG_1$ where $|\zG_0| = \Theta(|\zG_1|)=\Theta(|\zG|^{1/2})$.
Denote for $i\in \{0,1\}$ by $\pi_i$ the projection from $\zG$ onto $\zG_i$ which to $g=(g_0,g_1)$ associates  $g_i$.

The idea is to construct $f(\zV_0)$ and $f(\zV_1)$ themselves as 
$f(\zV_0)=f(\zV_{00})+f(\zV_{01})$ and 
$f(\zV_1)=f(\zV_{10}) + f(\zV_{11})$ in such a way that  the $\zV_{ij}$'s are of size $O(V^{1/2})$
and to solve a $4$-sum problem by solving various $2$-sum problems. In our coding theoretic setting, it will be more convenient to explain everything directly in terms
of the $4$-sum problem which is given in this case by
\begin{problem}\label{pb:SS}
Assume that $k+\ell$  and $p$ are multiples of $4$. 
Let 
\begin{eqnarray*}
\zG & = &\Ft^\ell, \;\;\zE  =  \Ft^{k+\ell}, \;\;f(v)  =  H'{v}^T\\
\zV_{00} &\eqdef &\{(e_{00},0_{3(k+\ell)/4})\in \Ft^{k+\ell} : e_{00 }\in \Ft^{(k+\ell)/4},\; |e_{00}|=p/4\} \\
\zV_{01} &\eqdef &\{(0_{(k+\ell)/4},e_{01},0_{(k+\ell)/2})\in \Ft^{k+\ell} : e_{01 }\in \Ft^{(k+\ell)/4},\; |e_{01}|=p/4\} \\
\zV_{10} &\eqdef &\{(0_{(k+\ell)/2},e_{10},0_{(k+\ell)/4})\in \Ft^{k+\ell} : e_{10 }\in \Ft^{(k+\ell)/4},\; |e_{10}|=p/4\} \\
\zV_{11} &\eqdef &\{(0_{3(k+\ell)/4},e_{11})\in \Ft^{k+\ell} : e_{11}\in \Ft^{(k+\ell)/4},\; |e_{11}|=p/4\} 
\end{eqnarray*}
and $S$ be some element in $\zG$.
Find $(v_{00},v_{01},v_{10},v_{11})$ in $\zV_{00}\times \zV_{01} \times \zV_{10} \times \zV_{11}$ such 
that $f(v_{00})+f(v_{01})+f(v_{10})+f(v_{11})=S$ and $h(v_{00}+v_{01}+v_{10}+v_{11})$ is of weight $w$.
\end{problem}
Let us explain now how the Shamir-Schroeppel idea allows us to solve the $4$-sum problem in time $\OO{V}$ and space $\OO{V^{1/2}}$ when 
the $\zV_{ij}$'s are of order $\OO{V^{1/2}}$, 
$|\zG|$ is of order $V$ and when $\zG$ decomposes as the product of two groups $\zG_0$ and $\zG_1$ both of size $\Th{V^{1/2}}$.
The basic idea is  to solve for all possible $r \in \zG_1$ the following $2$-sum problems 
\begin{eqnarray}
\pi_1(f(v_{00})) +\pi_1(f(v_{01})) & = & r\label{eq:problem1}\\
\pi_1(f(v_{10}))  +\pi_1(f(v_{11})) & = & \pi_1(S)-r \label{eq:problem2}
\end{eqnarray}
Once these problems are solved we are left with  $\OO{V^{1/2} V^{1/2}/V^{1/2}}=\OO{V^{1/2}} $ solutions to the first problem and $\OO{V^{1/2}}$ solutions to the second.
Taking any pair $(v_{00},v_{01})$ solution to \eqref{eq:problem1}  and $(v_{10},v_{11})$ solution to \eqref{eq:problem2} yields a $4$-tuple which is a partial solution to the
$4$-sum problem
$$
\pi_1( f(v_{00}))+\pi_1(f(v_{01})) +\pi_1(f(v_{10}))+\pi_1(f(v_{11}))  = r + \pi_1(S)-r = \pi_1(S).
$$
Let $\zV'_0$ be the set of all pairs $(v_{00},v_{01})$ we have found for the 
first $2$-sum problem \eqref{eq:problem1}, whereas $\zV'_1$ is the set of all solutions to \eqref{eq:problem2}. 
To ensure that $f(v_{00})+f(v_{01}) +f(v_{10})+f(v_{11}) =  S$ 
we just have to 
solve the following $2$-sum problem
$$
\underbrace{\pi_0(f(v_{00})) + \pi_0(f(v_{01}) )}_{f'(v_{00},v_{01})}+ \underbrace{\pi_0(f(v_{10})) + \pi_0(f(v_{11}) )}_{f'(v_{10},v_{11})} = \pi_0(S)
$$
and $$
g(v_{00},v_{01},v_{10},v_{11})=0
$$
where $(v_{00},v_{01})$ is in $\zV'_0$, $(v_{10},v_{11})$ is in $\zV'_1$ and 
$g$ is the function whose root we want to find for the original $4$-sum problem.

This is again of complexity $\OO{V^{1/2} V^{1/2}/V^{1/2}}=\OO{V^{1/2}} $.
Checking a particular value of $r$ takes therefore $\OO{V^{1/2}}$
 operations. Since we have $\Th{V^{1/2}}$ values to check, the 
total complexity is $\OO{V^{1/2}V^{1/2}}=\OO{V}$, that is the same as before, but we need only $\OO{V^{1/2}}$ memory to store all intermediate sets.
\begin{figure}[h]
    \centering
    \includegraphics[width=0.5\textwidth]{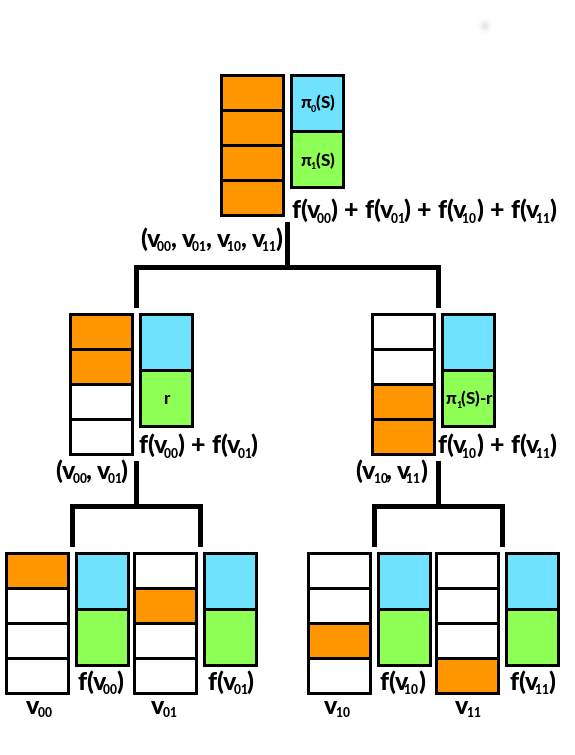}
    \caption{
The Shamir-Schroeppel idea in the decoding context (see Problem \ref{pb:SS}): the support of the elements of $\zV_{ij}$ is represented in orange, while
the blue and green colours represent $\zG_0$ resp. $\zG_1$.}
    \label{fig:ss}
\end{figure}
\subsection{A quantum version of the Shamir-Schroeppel algorithm}

By following the approach of \cite{BJLM13}, we will define a quantum
algorithm for  
solving the $4$-sum problem
by combining Grover search with a quantum walk
with a complexity given by

\begin{proposition}\label{prop:easy}
Consider the generalised $4$-sum problem  with sets $\zV_u$ of size  $V$. Assume that 
$\zG$ can be decomposed as $\zG =\zG_0 \times \zG_1$. 
There is a quantum algorithm for solving the $4$-sum problem running in time $\OOt{|\zG_1|^{1/2}V^{4/5}}$
as soon as $|\zG_1| = \Om{V^{4/5}}$ and $|\zG| = \Om{V^{8/5}}$.
\end{proposition}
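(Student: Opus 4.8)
The plan is to solve the $4$-sum problem by nesting a quantum walk inside a Grover search: the outer Grover layer guesses the intermediate target $r\in\zG_1$ appearing in the Shamir--Schroeppel split \eqref{eq:problem1}--\eqref{eq:problem2}, and the inner walk decides, for a fixed $r$, whether the distinguished solution we seek is consistent with that split. Since only the correct guess $r^{\star}=\pi_1(f(v_{00}^{\star})+f(v_{01}^{\star}))$ is ``good'' among the $|\zG_1|$ candidates, the Grover layer contributes a factor $\sqrt{|\zG_1|}$ (using \cite{BBHT98}), up to the polylogarithmic overhead needed to turn the bounded-error inner walk into a usable oracle, which is absorbed by the $\OOt{}$ notation. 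Everything then reduces to showing that, for fixed $r$, the inner walk runs in time $\OOt{V^{4/5}}$.

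For the inner walk I would take the graph to be the cartesian product of four Johnson graphs $J^4(V,t)$ with subset size $t \eqdef V^{4/5}$, so that a vertex is a quadruple $(S_{00},S_{01},S_{10},S_{11})$ of $t$-subsets, one drawn from each list $\zV_{ij}$. By Theorem~\ref{thm:product_johnsongraphs} this walk is aperiodic, irreducible and reversible, and its gap satisfies $\delta\ge\frac14\delta(J(V,t))$; by \eqref{eq:spectral_gap_johnson} and $t\ll V$ we have $\delta(J(V,t))=\frac{V}{t(V-t)}=\Th{1/t}$, hence $\delta=\Omt{V^{-4/5}}$ and $1/\sqrt{\delta}=\OOt{V^{2/5}}$. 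I declare a vertex \emph{marked} when its four subsets jointly contain a quadruple that is a full $4$-sum match for this $r$ and a root of $g$; for the correct guess $r^{\star}$ this happens exactly when the four coordinates $v_{ij}^{\star}$ of the sought solution all lie in their subsets, an event of probability $\varepsilon=(t/V)^4=V^{-4/5}$, so $1/\sqrt{\varepsilon}=V^{2/5}$.

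The data structure attached to a vertex must make \textsc{Check} trivial and \textsc{Update} cheap. I would store, suitably sorted, the list $L_0$ of pairs $(v_{00},v_{01})\in S_{00}\times S_{01}$ with $\pi_1(f(v_{00})+f(v_{01}))=r$, the analogous list $L_1$ for the pair $(v_{10},v_{11})$ with target $\pi_1(S)-r$, the list $L$ of quadruples from $L_0\times L_1$ that in addition agree on $\pi_0$ (the full $4$-sum matches), and a counter of how many elements of $L$ are roots of $g$; \textsc{Check} then just reads this counter, so $\CC=\OOt{1}$. Assuming the $f$-images behave like independent uniform elements of $\zG$, one gets $|L_0|=|L_1|=\Th{t^2/|\zG_1|}$ and $|L|=\Th{t^4/(|\zG_1|\,|\zG|)}$; with $t=V^{4/5}$ the hypotheses $|\zG_1|=\Om{V^{4/5}}$ and $|\zG|=\Om{V^{8/5}}$ force $|L_0|,|L_1|=\OO{V^{4/5}}$ and $|L|=\OO{V^{4/5}}$, whence $\CS=\OOt{t+|L_0|+|L_1|+|L|}=\OOt{V^{4/5}}$. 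For \textsc{Update}, swapping one element (say in $S_{00}$) removes and inserts only the $\OO{t/|\zG_1|}=\OO{1}$ pairs of $L_0$ touching it, and each such pair meets $\OO{t^2/|\zG|}=\OO{1}$ elements of $L_1$ on $\pi_0$; the two hypotheses are exactly what makes these increments $\OO{1}$, so $\CU=\OOt{1}$.

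Substituting these estimates into \eqref{eq:complexity_quantum_walk} gives an inner-walk cost of
$$\CS+\frac{1}{\sqrt{\varepsilon}}\left(\CC+\frac{1}{\sqrt{\delta}}\CU\right)=\OOt{V^{4/5}}+V^{2/5}\left(\OOt{1}+\OOt{V^{2/5}}\right)=\OOt{V^{4/5}},$$
and multiplying by the Grover factor $\sqrt{|\zG_1|}$ yields the claimed $\OOt{|\zG_1|^{1/2}V^{4/5}}$. The exponent $4/5$ is dictated by balancing the setup term $\CS\approx t=V^{c}$ against the walk term $1/\sqrt{\varepsilon\delta}=V^{2-3c/2}$, i.e. solving $c=2-\tfrac32 c$. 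The main obstacle I anticipate is the \textsc{Update} analysis: one must argue that the per-swap increments to $L_0,L_1,L$ are $\OOt{1}$, which is precisely where the independence/uniformity heuristic on the $f$-images together with the two size hypotheses on $\zG_1$ and $\zG$ are needed, and one must ensure the marked-fraction bound and the update bound hold simultaneously rather than only in expectation.
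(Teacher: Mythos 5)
Your proposal is correct and follows essentially the same route as the paper: an outer Grover search over $r\in\zG_1$ wrapped around an MNRS quantum walk on $J^4(V,U)$ with $U=\Th{V^{4/5}}$, the same cost parameters $\CS=\OO{U}$, $\CC=\OOt{1}$, $\CU=\OOt{1}$, $\delta=\Th{1/U}$, $\varepsilon=\Th{V^{-4/5}}$, and the same role for the two hypotheses $|\zG_1|=\Om{V^{4/5}}$ and $|\zG|=\Om{V^{8/5}}$ in keeping the intermediate lists and per-swap increments $\OO{1}$. The one obstacle you flag at the end --- that the update cost holds only in expectation --- is resolved in the paper exactly as you suggest it should be: a constant cap is imposed on the number of matches per element, the probability of hitting the cap is negligible, and the data structure is reinitialised if it is hit, which does not affect the overall complexity.
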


This is nothing but the idea of the algorithm \cite[Sec. 4]{BJLM13} laid out in a more general context.
The idea is as in the classical algorithm to look for the right value $r \in \zG_1$. This can be done with Grover search in 
time $\OO{|\zG_1|^{1/2}}$ instead of $\OO{|\zG_1|}$ in the classical case.
The quantum walk is then used to solve the following problem:

\begin{problem}\label{pb:shamir_shroeppel}
Find $(v_{00},v_{01},v_{10},v_{11})$ in $\zV_{00}\times \zV_{01} \times \zV_{10} \times \zV_{11}$ such that 
\begin{eqnarray*}
\pi_1(f(v_{00})) +\pi_1(f(v_{01})) & = & r \\
\pi_1(f(v_{10}))  +\pi_1(f(v_{11})) & = & \pi_1(S)-r \\
\pi_0(f(v_{00})) + \pi_0(f(v_{01})) + \pi_0(f(v_{10})) + \pi_0(f(v_{11})) & = &\pi_0(S)\\
g(v_{00},v_{01},v_{10},v_{11}) & = & 0.
\end{eqnarray*}
\end{problem}

For this, we choose subsets $\zU_i$'s of the $\zV_i$'s of a same size $U=\Th{V^{4/5}}$ and 
run a quantum walk on the graph whose vertices are all possible $4$-tuples of
sets of this kind and two $4$-tuples $(\zU_{00},\zU_{01},\zU_{10},\zU_{11})$ and
$(\zU_{00}',\zU_{01}',\zU_{10}',\zU_{11}')$ are adjacent if and only if we have for all $i$'s but one $\zU'_i=\zU_i$ and
for the remaining  $\zU'_i$ and $\zU_i$ we have $|\zU'_i \cap \zU_i|=U-1$.
Notice that this graph is nothing but $J^4(V,U)$.
By following \cite[Sec. 4]{BJLM13} it can be proved that 
\begin{restatable}{proposition}{propDataStructure}
\label{prop:data_structure}
Under the assumptions that $|\zG_1| = \Om{V^{4/5}}$ and $|\zG| = \Om{V^{8/5}}$,
it is possible to set up a data structure of size $\OO{U}$
to implement this quantum walk such that \\
(i) setting up the data structure takes time $\OO{U}$;\\
(ii) checking whether a new $4$-tuple leads to a solution to the problem above (and outputting the solution in this case) 
takes time $\OO{1}$,\\
(iii) updating the data structure takes time $\OO{\log U}$.
\end{restatable}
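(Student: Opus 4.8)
The plan is to attach to every vertex $(\zU_{00},\zU_{01},\zU_{10},\zU_{11})$ of the Johnson-product graph $J^4(V,U)$ a collection of history-independent dictionaries from which a solution to Problem~\ref{pb:shamir_shroeppel} can be read off in constant time. Concretely I would keep four dictionaries $T_{ij}$ holding the elements of $\zU_{ij}$ keyed by $\pi_1(f(v_{ij}))$; a dictionary $L_0$ of all pairs $(v_{00},v_{01})\in \zU_{00}\times\zU_{01}$ with $\pi_1(f(v_{00}))+\pi_1(f(v_{01}))=r$, keyed by the value $\pi_0(f(v_{00}))+\pi_0(f(v_{01}))$; the symmetric dictionary $L_1$ of pairs $(v_{10},v_{11})$ with $\pi_1$-sum equal to $\pi_1(S)-r$, keyed by its $\pi_0$-sum; and a dictionary $\Lambda$ listing the complete $4$-tuples that satisfy the two $\pi_1$-conditions, the $\pi_0$-condition and $g=0$ at once. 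The heart of the argument is the size bound: under $|\zG_1|=\Om{U}$ and $|\zG|=\Om{U^2}$ (that is, $|\zG_1|=\Om{V^{4/5}}$ and $|\zG|=\Om{V^{8/5}}$ with $U=\Tht{V^{4/5}}$) one has, on average over the random instance, $|L_0|,|L_1|=\OO{U^2/|\zG_1|}=\OO{U}$ and $|\Lambda|=\OO{|L_0|\,|L_1|/|\zG_0|}=\OO{U^4/(|\zG_1|\,|\zG|)}=\OO{U}$, so the whole structure fits in space $\OO{U}$.

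For~(i), the setup just fills these dictionaries in bulk: the $T_{ij}$ receive $\OO{U}$ elements; $L_0$ is produced by scanning $\zU_{01}$ and, for each $v_{01}$, reading off from $T_{00}$ the entries at address $r-\pi_1(f(v_{01}))$, which yields exactly the $\OO{U}$ pairs of $L_0$; $L_1$ is built symmetrically; and $\Lambda$ is obtained by scanning $L_1$ and looking each entry up in the $\pi_0$-keyed dictionary $L_0$, again touching only $\OO{U}$ pairs. Since only $\OO{U}$ items are ever processed, the bulk construction costs $\OO{U}$, proving~(i). Claim~(ii) is then immediate: the current vertex is marked exactly when $\Lambda\neq\varnothing$, which is a constant-time test, and an explicit solution is returned by reading any stored $4$-tuple.

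The real content is~(iii). A single step of the walk replaces one element, say $v_{00}$ by $v_{00}'$ in $\zU_{00}$ (the other three cases being identical by symmetry), and I would argue this propagates to only $\OO{1}$ entries on average. The element $v_{00}$ appears in $L_0$ paired precisely with those $v_{01}$ for which $\pi_1(f(v_{01}))=r-\pi_1(f(v_{00}))$, of which there are $\OO{U/|\zG_1|}=\OO{1}$, all retrieved by one look-up in $T_{01}$; these pairs are deleted from $L_0$ and from $\Lambda$, and the symmetric family created by $v_{00}'$ is inserted. Each newly created pair can enter $\Lambda$ only through the $\OO{|L_1|/|\zG_0|}=\OO{U^2/|\zG|}=\OO{1}$ elements of $L_1$ sharing its $\pi_0$-value, located by a single look-up in $L_1$. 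Hence a constant number of dictionary insertions and deletions, each costing $\OO{\log U}$, suffice, giving the claimed $\OO{\log U}$ update time.

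The main obstacle I anticipate is not a particular calculation but the compatibility of this scheme with the quantum walk of Theorem~\ref{th:quantumwalk}: the dictionaries must be genuinely history-independent, so that the data stored at a vertex depends only on the $4$-tuple of sets and not on the order of the updates that produced it, and it is this requirement that forces the $\OO{\log U}$ (rather than expected-constant hashing) cost for insertions and deletions and that must be checked in order to invoke formula~\eqref{eq:complexity_quantum_walk} legitimately. A secondary delicate point is that the size and propagation bounds above are expectations over the random draw of the $\zV_i$ and of $r$; some care is needed to state them as such (or to show they hold with high probability) and to confirm that it is exactly these averaged quantities $\CS=\OO{U}$, $\CC=\OO{1}$, $\CU=\OO{\log U}$ that the cost formula consumes.
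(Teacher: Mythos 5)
Your proposal is correct and follows essentially the same route as the paper: the same hierarchy of dictionaries mirroring the levels of the classical Shamir--Schroeppel computation, the same ``average $\OO{1}$ partners per element'' insight driving the update step, and $\OO{\log U}$ ordered-structure operations (the paper uses the radix trees of \cite{BJLM13}, precisely to meet the history-independence requirement you identify). The delicate point you flag about the bounds holding only on average is handled in the paper exactly along the lines you anticipate: a constant cap is imposed on the number of partners processed per update, this cap is exceeded only with negligible probability, and the data structure is re-initialised in that event, so the overall complexity is unaffected.
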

The proof which we give is adapted from \cite[Sec. 4]{BJLM13}.
\begin{proof}
$~$\\
\begin{enumerate}
\item \textbf{Setting up the data structure takes time $\OO{U}$.\\}
The data structure is set up more or less in the same way as in classical Shamir-Schroeppel's algorithm, i.e. by solving two 2-sum problems first and then using the result to solve a third and last 2-sum problem. There are however the following differences:\\ \\
(i) We no longer keep the results in a hashtable but in some other type of ordered data structure which allows for the insertion, deletion and search operations to be done in $\OO{\log U}$ time. For instance, \cite{BJLM13} chose radix trees. More detail will be given when we look at the \textsc{Update} operation.\\
(ii) Because we no longer use hashtables, we will need two data structures at each step, one to keep track of $f(v_{00})$ along with the associated $v_{00}$, $f(v_{00}) + f(v_{01})$ along with the associated $(v_{00},v_{01})$, etc. and another to keep track of $v_{00}$, $(v_{00}, v_{01})$, etc. separately. If we denote the first family of data structures by $\zD_f$ and the second family by $\zD_{\zV}$, this gives a total of 13 data structures (7 of type $\zD_{\zV}$ and 6 of type $\zD_f$, because no data structure is needed to  
store the sum of all four vectors which is simply $S$).\\ \\
Solving the first two 2-sum problems takes time $|\zU_{i0}| + |\zU_{i1}| + \frac{|\zU_{i0}|.|\zU_{i1}|}{|\zG_1|}$,
$i=0,1$, which is $\OO{U}$ because $|\zG_1| = \Om{V^{4/5}} = \Om{U}$. 
Denote by $\zU_{0}$ resp. $\zU_{1}$ the set of solutions to these two problems. These solutions are used to solve the second 2-sum, problem, which takes time 
$|\zU_{0}| + |\zU_{1}| + \frac{|\zU_{0}|.|\zU_{1}|}{|\zG_0|} = \OO{U}$ 
due to 
$\zG_0 = \zG / \zG_1$ 
and $|\zG| = \Om{V^{8/5}}$.\\ \\
Thus, setting up the data structure takes time $\OO{U}$.
\item \textbf{Updating the data structure takes time $\OO{\log U}$.\\}
Recall that the data structures are chosen such that the insertion, deletion and search operations take $\OO{\log U}$ time, and also that there are two data structures pertaining to each vector or pair of vectors, for a total of 13 data structures.\\ \\
Recall also that the update step consists in moving from one vertex of the Johnson graph $J^4(V,U)$ to one that is adjacent to it. Suppose, without loss of generality, that we move from the vertex $(\zU_{00},\zU_{01},\zU_{10},\zU_{11})$ to $(\zU_{00}',\zU_{01},\zU_{10},\zU_{11})$. Thus, a $v_{00} \in \zU_{00}$ has been replaced by a $u_{00} \in \zU_{00}$.\\ \\
Then, the low cost of the update step relies upon the following fundamental insight: there are in all $U$ possible ways of writing the sum $\pi_1(f(u_{00})) +\pi_1(f(v_{01}))$ (one for each $v_{01} \in \zU_{01}$). But we have one further constraint which is that this sum needs to be equal to a given $r \in \zG_1$. Thus, there are on average $\frac{|\zU_{ij}|}{|\zG_1|} = O(1)$ values of $v_{01} \in \zU_{01}$ which give a solution.\\ \\
Note that the same argument applies for the number of $(v_{10}, v_{11}) \in \zU_{1}$ that fulfil the condition
$$\pi_0(f(u_{00})) + \pi_0(f(v_{01})) + \pi_0(f(v_{10})) + \pi_0(f(v_{11})) ~ = ~\pi_0(S)$$
for a given 
$(u_{00}, v_{01}) \in \zU_{0}$
(where $\pi_0(S) \in \zG_0$), 
for in this case there are on average 
$\frac{|\zU_{0}|}{|\zG_0|} = O(1)$ 
such elements.\\ \\
This allows us to proceed as follows: we impose a constant limit on the number of $v_{01} \in \zU_{01}$ that correspond to a given $u_{00} \in \zU_{00}$ at each update operation. A similar limit is imposed on the number of $(v_{10}, v_{11}) \in \zU_{1}$. The probability of reaching this limit is negligeable, and if it is reached, we re-initialise the data structure, so this does not modify the overall complexity of the algorithm. Note also that there is no problem when the opposite situation happens, i.e. when there are no $v_{01} \in \zU_{01}$ corresponding to a given $u_{00} \in \zU_{00}$. Indeed, while this may result in the data structure being depleted, this is a temporary situation and the data structure will be refilled over time as more suitable elements occur.\\ \\
We now enumerate the steps needed to update the data structure. What we need to do is to remove the old element $v_{00}$ and everything that has been constructed using it, and add $u_{00}$ and everything that it allows to construct (within the limits discussed above).
First, to remove $v_{00}$ and the other elements it affects, we need to do the following:
\begin{enumerate}
\item Find and delete $v_{00}$ from the data structure $\zD_{\zU_{00}}$.
\item Calculate $f(v_{00})$, then find and delete it from the data structure $\zD_{f_{00}}$.
\item Find at most a constant number of $(v_{00}, v_{01})$ in $\zD_{\zU_{0}}$ and remove them.
\item For each of these $(v_{00}, v_{01})$, calculate $f(v_{00}) + f(v_{01})$ and remove it from $\zD_{f_{0}}$.
\item Find at most a constant number of $(v_{00}, v_{01}, v_{01}, v_{11})$ in $\zD_{\zU}$ and remove them.
\end{enumerate}
This step uses operations of negligeable cost (calculating $f(v_{00})$, etc.) and the number of operations of cost 
$\log(U)$ 
which it uses is bounded by a constant. Thus it takes time 
$\OO{\log(U)}$.\\ \\
To add $u_{00}$ and other new elements depending on it, we proceed as follows:
\begin{enumerate}
\item Insert $u_{00}$ in $\zD_{\zU_{00}}$.
\item Calculate $f(u_{00})$, then insert it in $\zD_{f_{00}}$.
\item Calculate $x=r - \pi_1(f(u_{00}))$ and find if there are elements $y$ in $\zD_{f_{01}}$ such that $\pi_1(y)=x$. For a constant number of associated $v_{01}$, insert $(u_{00}, v_{01})$ in $\zD_{\zU_{0}}$ and in $\zD_{f_{0}}$ associated with $r$.
\item Similarly there are a constant number of $(v_{01}, v_{11})$ that need to be updated, for those calculate $g(u_{00}, v_{01}, v_{10}, v_{11})$. If it is equal to zero, insert $(v_{00}, v_{01}, v_{10}, v_{11})$ in $\zD_{\zU}$.
\end{enumerate}
It is easy to see that this step also takes time 
$\OO{\log(U)}$.
\item \textbf{Checking whether a new $4$-tuple leads to a solution of the problem takes time $\OO{1}$.\\}
Checking that the right $4$-tuple is in $\zD_{\zU}$ requires looking for it in $\zD_{\zU}$ at the first step of the algorithm. This costs $O(\sqrt{U})$ using Grover search. At the following steps of the algorithm, it is enough to check the new elements (whose number is bounded by a constant) that have been added to $\zD_{\zU}$. So the checking cost is 
$\OO{1}$
 overall.$~\qed$
\end{enumerate}
\end{proof}
Proposition \ref{prop:easy} is essentially a corollary of this proposition. 

\begin{proof} [Proof of Proposition \ref{prop:easy}]
Recall that 
the cost of the quantum walk is given by
$
\CS + \frac{1}{\sqrt\varepsilon}\left(\CC + \frac{1}{\sqrt\delta}\CU\right)$
 where $\CS,\CC,\CU,\varepsilon$ and $\delta$
are the setup cost, the check cost, the update cost, the proportion of marked elements
and the spectral gap of the quantum walk. 
From Proposition \ref{prop:data_structure}, we know that
$\CS =  \OO{U} = \OO{V^{4/5}}$, $\CC  =  \OO{1}$,
and $\CU  =  \OO{\log U}$.
Recall that the spectral gap of $J(V,U)$ is equal to $\frac{V}{U(V-U)}$ by \eqref{eq:spectral_gap_johnson}.
This quantity is larger than $\frac{1}{U}$ and by using  
 Theorem \ref{thm:product_johnsongraphs} on the cartesian product of Johnson graphs, we obtain
$\delta = \Th{\frac{1}{U}}$.

Now for the proportion of marked elements we argue as follows. If Problem \ref{pb:shamir_shroeppel} has a solution $(v_{00},v_{01},v_{10},v_{11})$, then the probability that 
each of the sets $\zU_i$ contains $v_i$ is precisely $U/V=\Th{V^{-1/5}}$.
The probability $\varepsilon$ that all the $\zU_i$'s contain $v_i$ is then 
$\Th{V^{-4/5}}$.  
This gives a total cost of 
$$
\OO{V^{4/5}} + \OO{V^{2/5}}\left( \OO{1} + \OO{V^{2/5}}\OO{\log U}\right) = \OOt{V^{4/5}}.
$$
When we multiply this by the cost of Grover's algorithm for finding the right $r$ we have the aforementioned complexity.$~\qed$
\end{proof}

\subsection{Application to the decoding problem}
When applying this approach to the decoding problem we obtain
\begin{restatable}{theorem}{thmExpSSQW}
\label{th:expSSQW}
We can decode $w=\omega n$ errors in a random linear code of length $n$ and rate $R=\frac{k}{n}$  with a quantum complexity of
order $\OOt{2^{\alpha_{\text{SSQW}}(R,\omega)n}}$ where 
%\ghk{Corrected another $\ell$ that should be a $\lambda$}
$$
\alpha_{\text{SSQW}}(R, \omega) \eqdef \min_{(\pi,\lambda) \in \zR} \left(\frac{H_2(\omega) - (1-R-\lambda)H_2\left(\frac{\omega - \pi}{1 - R - \lambda}\right) - \frac{2}{5}(R+\lambda)H_2\left(\frac{\pi}{R+\lambda}\right)}{2} \right)$$
$$
\zR  \eqdef  \left\{\!(\pi,\lambda)\!\!\in\!\![0,\omega]\!\times\![0,1)\!:\lambda = \frac{2}{5}(R+\lambda)H_2\!\left(\!\frac{\pi}{R+\lambda}\!\right)\!\!,
\pi \leq R + \lambda, \lambda \leq 1-R-\omega+\pi \! \right\}
$$
%\Leftrightarrow \pi = (R+\lambda)H_2^{-1}\left(\frac{5\lambda}{2(R + \lambda)}\right)$
\end{restatable}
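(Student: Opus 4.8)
The plan is to instantiate the generic quantum ISD skeleton of Section~\ref{sec:classical_quantum_decoding} with the quantum Shamir--Schroeppel solver of Proposition~\ref{prop:easy} playing the role of $Search_A$, and then to optimise the resulting exponent over the two free parameters $p$ and $\ell$. Throughout I set $\lambda \eqdef \ell/n$ and $\pi \eqdef p/n$. First I would read off the sizes of the objects entering Problem~\ref{pb:SS}: here $\zG = \Ft^\ell$ so $|\zG| = 2^\ell$, the decomposition $\zG = \zG_0 \times \zG_1$ has $|\zG_1| = \Tht{2^{\ell/2}}$, and each of the four lists has size $V = \binom{(k+\ell)/4}{p/4} = \Tht{2^{\frac{R+\lambda}{4}H_2\left(\frac{\pi}{R+\lambda}\right)n}}$. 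Writing $v \eqdef \frac{1}{n}\log_2 V = \frac{R+\lambda}{4}H_2\left(\frac{\pi}{R+\lambda}\right)$, Proposition~\ref{prop:easy} gives the cost of one call to $Search_A$ as $T_A = \OOt{|\zG_1|^{1/2}V^{4/5}} = \OOt{2^{\left(\frac{\lambda}{4}+\frac{4v}{5}\right)n}}$, valid under the hypotheses $|\zG_1| = \Om{V^{4/5}}$ and $|\zG| = \Om{V^{8/5}}$.

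Next I would compute the success probability $P_A$, which factors into two parts. The first is $P_{\ell,p}$, the probability that the random set $\zS$ carries exactly $p$ of the $w$ errors; the second is the conditional probability that these $p$ errors split as $p/4$ in each of the four quarters of $\zS$, so that the decomposition of Problem~\ref{pb:SS} admits the genuine error as a solution. This conditional factor is $\binom{(k+\ell)/4}{p/4}^4 / \binom{k+\ell}{p} = V^4/\binom{k+\ell}{p}$, and on multiplying by $P_{\ell,p}$ the middle binomial cancels, leaving
\[
P_A = \frac{\binom{n-k-\ell}{w-p}}{\binom{n}{w}}\,V^4 = \Tht{2^{\left[(1-R-\lambda)H_2\left(\frac{\omega-\pi}{1-R-\lambda}\right)-H_2(\omega)+4v\right]n}}.
\]
Plugging $T_A$ and $P_A$ into the quantum ISD cost $\OO{T_A/\sqrt{P_A}}$ of~\eqref{eq:T_quant_ISD} and collecting exponents yields a complexity exponent of
\[
\frac{\lambda}{4}-\frac{6v}{5}+\frac{1}{2}\left[H_2(\omega)-(1-R-\lambda)H_2\left(\frac{\omega-\pi}{1-R-\lambda}\right)\right].
\]

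The final step is the optimisation over $(\pi,\lambda)$. The natural feasibility constraints $0\le p\le w$, $p\le k+\ell$ and $w-p\le n-k-\ell$ translate into $\pi\in[0,\omega]$, $\pi\le R+\lambda$ and $\lambda\le 1-R-\omega+\pi$, which are precisely the inequalities appearing in $\zR$. The two hypotheses of Proposition~\ref{prop:easy} both reduce, after taking logarithms, to the single requirement $\lambda \ge \tfrac{8v}{5} = \frac{2}{5}(R+\lambda)H_2\left(\frac{\pi}{R+\lambda}\right)$. I expect the main obstacle to be justifying that this constraint is \emph{active} at the optimum, i.e. that the minimising pair satisfies the equality $\lambda = \frac{2}{5}(R+\lambda)H_2\left(\frac{\pi}{R+\lambda}\right)$ defining $\zR$: because $T_A$ increases with $\ell$ while the $\binom{n-k-\ell}{w-p}$ factor of $P_A$ moves in the opposite direction, this is a genuine (and partly numerical) sign analysis of the objective rather than a one-line argument.

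Granting this equality, one has $\frac{\lambda}{4} = \frac{1}{10}(R+\lambda)H_2\left(\frac{\pi}{R+\lambda}\right)$ and $\frac{6v}{5} = \frac{3}{10}(R+\lambda)H_2\left(\frac{\pi}{R+\lambda}\right)$, so the first two terms of the exponent collapse to $-\frac{1}{5}(R+\lambda)H_2\left(\frac{\pi}{R+\lambda}\right)$ and the whole exponent becomes exactly
\[
\frac{1}{2}\left[H_2(\omega)-(1-R-\lambda)H_2\left(\frac{\omega-\pi}{1-R-\lambda}\right)-\frac{2}{5}(R+\lambda)H_2\left(\frac{\pi}{R+\lambda}\right)\right],
\]
whose minimum over $\zR$ is $\alpha_{\text{SSQW}}(R,\omega)$, as claimed.
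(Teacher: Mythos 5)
Your proposal is correct and follows essentially the same route as the paper's proof: the same success probability $P_{\text{SSQW}} = \binom{(k+\ell)/4}{p/4}^4\binom{n-k-\ell}{w-p}/\binom{n}{w}$ (you merely factor it as $P_{\ell,p}$ times the conditional even-split probability), the same choice $|\zG_1| = 2^{\lceil \ell/2 \rceil}$ with both hypotheses of Proposition~\ref{prop:easy} reducing to $2^{\ell} = \Om{V^{8/5}}$, and the same collapse of the exponent under $\lambda = \frac{2}{5}(R+\lambda)H_2\left(\frac{\pi}{R+\lambda}\right)$. The one remark worth making is that your anticipated ``main obstacle'' --- showing this constraint is active at the optimum --- is not actually needed: the theorem defines $\alpha_{\text{SSQW}}$ as a minimum over the region $\zR$ in which the equality is already imposed, so verifying the exponent pointwise on $\zR$, exactly as you do, completes the proof with no sign analysis required.
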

\begin{proof}
Recall (see \eqref{eq:T_quant_ISD}) that the quantum complexity is given by
\begin{equation}
\label{eq:complexitySSQW}
\tilde O\left(\frac{T_{\text{SSQW}}}{\sqrt{P_{\text{SSQW}}}}\right)
\end{equation}
where $T_{\text{SSQW}}$ is the complexity of the combination of Grover's algorithm and quantum walk
solving the generalised $4$-sum problem specified in Problem \ref{pb:shamir_shroeppel} and $P_{\text{SSQW}}$ is the probability
that the random set of $k + \ell$ positions $\zS$ and its random partition in $4$ sets of the same size that are chosen is such that 
all four of them contain exactly $p/4$ errors.
Note that $p$ and $\ell$ are chosen such that $k+\ell$ and $p$ are divisible by $4$.
$P_{\text{SSQW}}$ is given by
$$
P_{\text{SSQW}} = \frac{  {\binom{\frac{k+\ell}{4}}{\frac{p}{4}}}^4 \binom{n-k-\ell}{w - p}}{\binom{n}{w}}
$$
Therefore
\begin{equation}
\label{eq:PSSQW}
(P_{\text{SSQW}})^{-1/2} = \OOt{2^{\frac{H_2(\omega) - (1-R-\lambda)H_2\left(\frac{\omega - \pi}{1 - R - \lambda}\right) - (R+\lambda)H_2\left(\frac{\pi}{R+\lambda}\right) }{2} n}}
\end{equation}
where $\lambda \eqdef \frac{\ell}{n}$ and $\pi \eqdef \frac{p}{n}$. $T_{\text{SSQW}}$ is given by Proposition \ref{prop:easy}:
$$
T_{\text{SSQW}}  =  \OOt{|\zG_1|^{1/2}V^{4/5}}
$$
where the sets involved in the generalised $4$-sum problem are specified in Problem \ref{pb:shamir_shroeppel}.
This gives
$$
V  =  \binom{\frac{k + \ell}{4}}{\frac{p}{4}}
$$
We choose $\zG_1$ as 
\begin{equation}\label{eq:condition1}
\zG_1 = \Ft^{\lceil \frac{\ell}{2} \rceil}
\end{equation}
and the assumptions of Proposition \ref{prop:easy} are verified as soon as
$$
2^\ell = \Om{V^{8/5}}.
$$
which amounts to
$$
2^\ell = \Om{  {\binom{\frac{k + \ell}{4}}{\frac{p}{4}}}^{8/5}}
$$
This explains the condition 
\begin{equation}\label{eq:condition2}
\lambda = \frac{2}{5} (R + \lambda) H_2\left( \frac{\pi}{R+\lambda} \right)
\end{equation}
found in the definition of the region $\zR$.
With the choices \eqref{eq:condition1} and \eqref{eq:condition2}, 
we obtain
\begin{eqnarray}
T_{\text{SSQW}} &= &\OOt{V^{6/5}} \nonumber \\
& = & \OOt{ 2^{\frac{3}{10}(R+\lambda)H_2\left( \frac{\pi}{R+\lambda} \right)n}}\label{eq:TSSQW}
\end{eqnarray}
Substituting for $P_{\text{SSQW}}$ and $T_{\text{SSQW}}$ the expressions given by \eqref{eq:PSSQW} and \eqref{eq:TSSQW} finishes the proof of the
theorem. $~\qed$
\end{proof}

\section{Improvements obtained by the representation technique and ``$1+1=0$''}
There are two techniques that can be used to
speed up the quantum algorithm of the previous section.

\par{\em The representation technique.} It was introduced in \cite{HJ10} to speed up algorithms for the subset-sum algorithm and
used later on in \cite{MMT11} to improve decoding algorithms. The basic idea of the representation technique in the context of the 
subset-sum or decoding algorithms consists in (i) changing slightly the underlying (generalised) $k$-sum problem which is solved by 
introducing sets $\zV_i$ for which there are (exponentially) many solutions to the problem $\sum_i f(v_i) = S$ by using redundant representations,
 (ii) noticing that this allows us to put additional subset-sum conditions on the solution.

In the decoding context, instead of considering sets of errors with non-overlapping support, the idea that allows us to obtain many 
different representations of a same solution is just to consider sets $\zV_i$ corresponding to errors with overlapping supports. In our case,
we could have taken instead of the four sets defined in the previous section the following sets
\begin{eqnarray*}
\zV_{00}=\zV_{10} &\eqdef &\{(e_{00},0_{(k+\ell)/2})\in \Ft^{k+\ell} : e_{00 }\in \Ft^{(k+\ell)/2},\; |e_{00}|=p/4\} \\
\zV_{01}=\zV_{11} &\eqdef &\{(0_{(k+\ell)/2},e_{01})\in \Ft^{k+\ell} : e_{01 }\in \Ft^{(k+\ell)/2},\; |e_{01}|=p/4\}
\end{eqnarray*}
Clearly a vector $e$ of weight $p$ can be written in many different ways as a sum
$v_{00}+v_{01}+v_{10}+v_{11} $ where $v_{ij}$ belongs to $\zV_{ij}$. This is (essentially) due to the fact that a vector of weight $p$ can be written in $\binom{p}{p/2} = \OOt{2^p}$ 
ways as a sum of two vectors  of weight $p/2$.

The point is that if we apply now the same algorithm as in the previous section and look for solutions to Problem \ref{pb:SS}, there is not a single value of $r$
that leads to the right solution. Here, about $2^p$ values of $r$ will do the same job. The speedup obtained by the representation technique 
is a consequence of this phenomenon. We can even improve on this representation technique by using the $1+1=0$ phenomenon as in \cite{BJMM12}.

\par{\em The ``$1+1=0$'' phenomenon.} Instead of choosing the $\zV_i$'s as explained above we will actually choose the $\zV_i$'s as
\begin{eqnarray}
\zV_{00}=\zV_{10} &\eqdef &\{(e_{00},0_{(k+\ell)/2})\in \Ft^{k+\ell} : e_{00 }\in \Ft^{(k+\ell)/2},\; |e_{00}|=\frac{p}{4} +\frac{\Delta p}{2}\} \label{eq:v00v10}\\
\zV_{01}=\zV_{11} &\eqdef &\{(0_{(k+\ell)/2},e_{01})\in \Ft^{k+\ell} : e_{01 }\in \Ft^{(k+\ell)/2},\; |e_{01}|=\frac{p}{4} +\frac{\Delta p}{2}\} \label{eq:v01v11}
\end{eqnarray}
A vector $e$ of weight $p$ in $\Ft^{k+\ell}$ can indeed by represented in many ways as a sum of $2$ vectors of weight $\frac{p}{2} +\Delta p$. More precisely, such a vector can be 
represented in 
$
\binom{p}{p/2} \binom{k+\ell-p}{\Delta p} 
$
ways. Notice that this number of representations is greater than the number $2^p$ that we had before. This explains why choosing an appropriate
positive value $\Delta p$ allows us to improve on the previous choice.

The quantum algorithm for decoding follows the same pattern as in the previous section:
(i) we look with Grover's search algorithm for a right set $\zS$ of $k + \ell$ positions 
such that the restriction $e'$ of the error $e$ we look for is of weight $p$ on this subset
and then (ii) we search for $e'$ by solving a generalised $4$-sum problem  with a combination
of Grover's algorithm and a quantum walk. We will use for the second point the following proposition
which quantifies how much we gain when there are multiple representations/solutions:

\begin{proposition}\label{prop:improvement}
Consider the generalised $4$-sum problem  with sets $\zV_u$ of size  $\OO{V}$. Assume that 
$\zG$ can be decomposed as $\zG =\zG_0 \times \zG_1 \times \zG_2$. Furthermore assume that 
there are $\Om{|\zG_2|}$ solutions to the $4$-sum problem and that we can fix arbitrarily 
the value $\pi_2\left(f(v_{00})+f(v_{01})\right)$ of a solution to the $4$-sum problem, 
where $\pi_2$ is the mapping from $\zG =\zG_0 \times \zG_1 \times \zG_2$ to $\zG_2$ which maps 
$(g_0,g_1,g_2)$ to $g_2$.
There is a quantum algorithm for solving the $4$-sum problem running in time $\OOt{|\zG_1|^{1/2}V^{4/5}}$
as soon as $|\zG_1|\cdot|\zG_2| = \Om{V^{4/5}}$ and $|\zG| = \Om{V^{8/5}}$.
\end{proposition}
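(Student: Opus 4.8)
The plan is to run exactly the algorithm behind Proposition \ref{prop:easy}, namely an outer Grover search over an intermediate matching value combined with a quantum walk on $J^4(V,U)$ with $U = \Th{V^{4/5}}$, and to absorb the representation technique into the choice of \emph{which part} of the intermediate matching value we actually search over. The single structural change is that the intermediate matching now takes place in the product $\zG_1 \times \zG_2$ rather than in $\zG_1$ alone, so I would instantiate the analogue of Problem \ref{pb:shamir_shroeppel} in which the first two $2$-sum conditions jointly fix both components, $\pi_1(f(v_{00})+f(v_{01})) = r_1$ and $\pi_2(f(v_{00})+f(v_{01})) = r_2$, together with the complementary conditions $\pi_1(f(v_{10})+f(v_{11})) = \pi_1(S)-r_1$, $\pi_2(f(v_{10})+f(v_{11})) = \pi_2(S)-r_2$, and the final $\zG_0$-matching $\pi_0(f(v_{00})+f(v_{01})+f(v_{10})+f(v_{11})) = \pi_0(S)$.

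The first step is to set up the Grover layer. I would fix once and for all an arbitrary value $r_2 \in \zG_2$ and impose $\pi_2(f(v_{00})+f(v_{01})) = r_2$. By the hypothesis that there are $\Om{|\zG_2|}$ solutions and that $\pi_2(f(v_{00})+f(v_{01}))$ may be fixed arbitrarily on a solution, a solution compatible with this choice of $r_2$ still exists; crucially this costs nothing, so the outer Grover search only has to range over the $\zG_1$-component $r_1$, giving a Grover cost of $\OO{|\zG_1|^{1/2}}$ rather than $\OO{(|\zG_1|\,|\zG_2|)^{1/2}}$. This is precisely where the speed-up promised by the representation technique is realised.

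Next I would run the quantum walk for the fixed pair $(r_1,r_2)$. The data structure of Proposition \ref{prop:data_structure} carries over essentially verbatim, with $\zG_1 \times \zG_2$ now playing the role previously played by $\zG_1$: the two ``outer'' $2$-sum problems each produce $\OO{U^2/(|\zG_1|\,|\zG_2|)} = \OO{U}$ partial solutions precisely because $|\zG_1|\,|\zG_2| = \Om{V^{4/5}} = \Om{U}$, and the final combining $2$-sum over $\zG_0 = \zG/(\zG_1\times\zG_2)$ again produces $\OO{U}$ solutions because $|\zG| = \Om{V^{8/5}}$. Hence $\CS = \OO{U}$, $\CC = \OO{1}$ and $\CU = \OO{\log U}$ as before. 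With $\delta = \Th{1/U}$ (from \eqref{eq:spectral_gap_johnson} and Theorem \ref{thm:product_johnsongraphs}) and, since each $\zU_i$ of size $U$ contains its target component with probability $U/V$, a marked proportion $\varepsilon = \Om{(U/V)^4} = \Om{V^{-4/5}}$, the cost \eqref{eq:complexity_quantum_walk} of the walk is $\OOt{V^{4/5}}$; multiplying by the Grover cost yields the claimed $\OOt{|\zG_1|^{1/2}V^{4/5}}$.

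The step I expect to require the most care is justifying that fixing the $\zG_2$-component to a single arbitrary value is legitimate, i.e. that the $\Om{|\zG_2|}$ solutions are spread across the values of $\pi_2(f(v_{00})+f(v_{01}))$ densely enough that for the chosen $r_2$ at least one full solution survives, whose $\zG_1$-component Grover can then locate. A closely related point to verify is that imposing the extra $\zG_2$-constraint does not inflate the data structure: the invariant that each intermediate list stays of size $\OO{U}$, and hence that \textsc{Update} remains $\OO{\log U}$, is exactly what the hypothesis $|\zG_1|\,|\zG_2| = \Om{V^{4/5}}$ is designed to preserve, and this is the inequality I would check most carefully.
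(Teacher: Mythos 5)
Your proposal is correct and takes essentially the same route as the paper's own proof: fix $r_2 \in \zG_2$ arbitrarily (legitimised by the hypothesis that a solution's value $\pi_2\left(f(v_{00})+f(v_{01})\right)$ can be fixed arbitrarily), run a Grover search only over $r_1 \in \zG_1$ at cost $\OO{|\zG_1|^{1/2}}$, and for each fixed pair $(r_1,r_2)$ solve the constrained problem with the quantum walk of Proposition \ref{prop:easy}, with $\zG_1 \times \zG_2$ playing the role previously played by $\zG_1$, yielding $\OOt{|\zG_1|^{1/2}V^{4/5}}$. Your additional verification that the intermediate lists remain of size $\OO{U}$ under $|\zG_1|\cdot|\zG_2| = \Om{V^{4/5}}$ is precisely the check the paper delegates implicitly to Proposition \ref{prop:data_structure}.
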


\begin{proof}
Let us first introduce a few notations.
We denote by $\pi_{12}$ the ``projection'' from $\zG=\zG_0 \times \zG_1 \times \zG_2$ to 
$\zG_1 \times \zG_2$ which associates to $(g_0,g_1,g_2)$ the pair $(g_1,g_2)$
and by $\pi_0$ the projection from $\zG$ to $\zG_0$ which maps 
$(g_0,g_1,g_2)$ to $g_0$.
As in the previous section, we solve with a  quantum walk the following problem:
we fix an element $r=(r_1,r_2)$ in $\zG_1 \times \zG_2$ and
find (if it exists) $(v_{00},v_{01},v_{10},v_{11})$ in $\zV_{00}\times \zV_{01} \times \zV_{10} \times \zV_{11}$ such that 
\begin{eqnarray*}
\pi_{12}(f(v_{00})) +\pi_{12}(f(v_{01})) & = & r \\
\pi_{12}(f(v_{10}))  +\pi_{12}(f(v_{11})) & = & \pi_{12}(S)-r \\
\pi_0(f(v_{00})) + \pi_0(f(v_{01})) + \pi_0(f(v_{10})) + \pi_0(f(v_{11})) & = &\pi_0(S)\\
g(v_{00},v_{01},v_{10},v_{11}) & = & 0.
\end{eqnarray*}
The difference with Proposition \ref{prop:easy} is that we do not check all possibilities for $r$ but just all possibilities for $r_1 \in \zG_1$ and
fix $r_2$ arbitrarily. As in Proposition \ref{prop:easy}, we perform a quantum walk whose complexity is $\OOt{V^{4/5}}$ to 
solve the aforementioned problem for a fixed $r$. 
What remains to be done is to find the right value for $r_1$ which is achieved by a Grover 
search with complexity $\OO{|\zG_1|^{1/2}}$.$~\qed$
\end{proof}
\begin{figure}[h]
    \centering
    \includegraphics[width=0.5\textwidth]{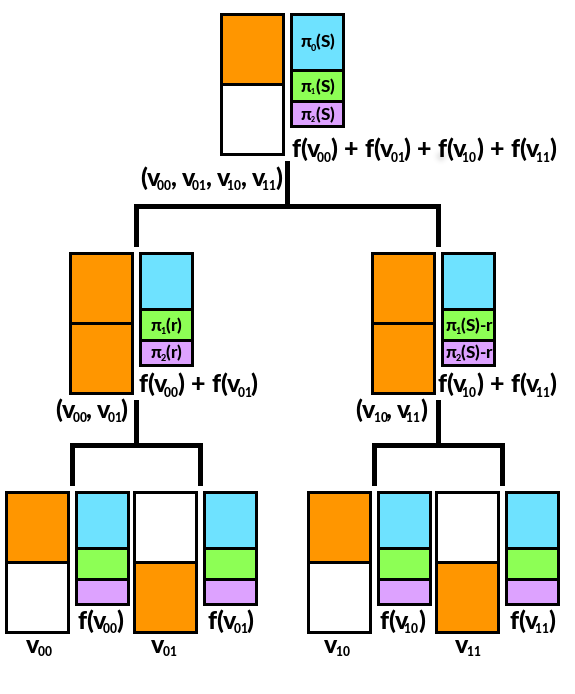}
    \caption{The representation technique: 
the support of the elements of $\zV_{ij}$ is represented in orange, while
the blue, green and violet colours represent $\zG_0$ resp. $\zG_1$, resp. $\zG_2$.}
    \label{fig:mmt}
\end{figure}
By  applying Proposition \ref{prop:improvement} in our decoding context, we obtain
\begin{restatable}{theorem}{thmExpMMTQW}
\label{th:expMMTQWdiese}
We can decode $w=\omega n$ errors in a random linear code of length $n$ and rate $R=\frac{k}{n}$  with a quantum complexity of 
order $ \OOt{2^{\alpha_{\text{MMTQW}}(R,\omega)n}}$ where
\begin{eqnarray*}
    \alpha_{\text{{\tiny{MMTQW}}}}(R,\omega) &\eqdef& \min_{(\pi,\Delta \pi, \lambda) \in \zR} \left( 
\frac{  \beta(R,\lambda,\pi,\Delta \pi)+ \gamma(R,\lambda,\pi,\omega)}{2}\right) \\
&\text{with} & \\
\beta(R,\lambda,\pi,\Delta \pi) & \eqdef & \frac{6}{5} (R+\lambda)H_2\left(\frac{\pi/2+\Delta \pi}{R+\lambda}\right)- \pi - (1-R-\lambda)H_2\left(\frac{\Delta \pi }{1-R-\lambda}\right),\\
\gamma(R,\lambda,\pi,\omega) & \eqdef & H_2(\omega) - (1-R-\lambda)H_2(\frac{\omega - \pi}{1-R-\lambda}) - (R+\lambda)H_2\left(\frac{\pi}{R+\lambda}\right)
\end{eqnarray*}
where $\zR$ is the subset of elements $(\pi,\Delta \pi, \lambda)$ of 
$[0,\omega]  \times [0,1) \times [0,1) $
 that satisfy the following constraints
\begin{eqnarray*}
0 \leq  &\Delta \pi  &\leq R + \lambda - \pi \\
0 \leq &\pi &\leq \min(\omega, R + \lambda)\\
0 \leq &\lambda &\leq 1 - R - \omega + \pi \\
&\pi &= 2\left((R+\lambda)H_2^{-1}\left(\frac{5\lambda}{4(R+\lambda)}\right) - \Delta \pi\right)
\end{eqnarray*}
\end{restatable}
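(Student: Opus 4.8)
The plan is to reproduce the structure of the proof of Theorem~\ref{th:expSSQW}, feeding the representation-based parameters into the generalised-ISD cost formula \eqref{eq:T_quant_ISD}. Thus I would write the total quantum cost as $\OOt{T_{\text{MMTQW}}/\sqrt{P_{\text{MMTQW}}}}$, where $P_{\text{MMTQW}}$ is the probability that a random set $\zS$ of $k+\ell$ positions is good, meaning that the restriction $e'$ of $e$ to $\zS$ has weight exactly $p$, and $T_{\text{MMTQW}}$ is the cost of solving the associated generalised $4$-sum problem with the overlapping sets \eqref{eq:v00v10}, \eqref{eq:v01v11}. Contrary to the Shamir--Schroeppel case, there is now no even-splitting requirement inside $\zS$ (the representation technique absorbs it), so $P_{\text{MMTQW}} = \binom{k+\ell}{p}\binom{n-k-\ell}{w-p}/\binom{n}{w}$; applying $\binom{n}{w} = \Tht{2^{H_2(w/n)n}}$ and writing $\lambda \eqdef \ell/n$, $\pi \eqdef p/n$ yields $(P_{\text{MMTQW}})^{-1/2} = \OOt{2^{\gamma(R,\lambda,\pi,\omega)n/2}}$ with $\gamma$ exactly as in the statement.

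The core of the argument is the evaluation of $T_{\text{MMTQW}}$ through Proposition~\ref{prop:improvement}. First I would read off $V = \binom{(k+\ell)/2}{p/4+\Delta p/2}$ from \eqref{eq:v00v10}--\eqref{eq:v01v11}, and count the representations of a fixed weight-$p$ vector $e'$ as $v_{00}+v_{01}+v_{10}+v_{11}$: grouping $w_1 = v_{00}+v_{01}$ and $w_2 = v_{10}+v_{11}$, each of weight $\tfrac{p}{2}+\Delta p$ and supported on all of $\zS$, one gets $N_{\text{rep}} = \binom{p}{p/2}\binom{k+\ell-p}{\Delta p}$ (choose which $\tfrac{p}{2}$ of the $p$ support positions come from $w_1$, and which $\Delta p$ of the $k+\ell-p$ off-support positions carry the $1+1=0$ cancellations). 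I would then instantiate $\zG=\Ft^\ell=\zG_0\times\zG_1\times\zG_2$ by taking $|\zG_2| = \Tht{N_{\text{rep}}}$ (so that the $\Om{|\zG_2|}$-solutions and arbitrary-$\pi_2$ hypotheses of Proposition~\ref{prop:improvement} hold, the redundancy being spread essentially uniformly over $\zG_2$), $|\zG_1| = \Tht{V^{4/5}/N_{\text{rep}}}$ and $|\zG_0| = \Tht{V^{4/5}}$. This gives $|\zG_1|\cdot|\zG_2| = \Tht{V^{4/5}}$ and $|\zG| = \Tht{V^{8/5}}$, so both size hypotheses are met and $T_{\text{MMTQW}} = \OOt{|\zG_1|^{1/2}V^{4/5}} = \OOt{V^{6/5}/N_{\text{rep}}^{1/2}}$. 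Taking $\tfrac1n\log_2$ produces $T_{\text{MMTQW}} = \OOt{2^{\beta(R,\lambda,\pi,\Delta\pi)n/2}}$, the $V^{6/5}$ term supplying $\tfrac65(R+\lambda)H_2\big(\tfrac{\pi/2+\Delta\pi}{R+\lambda}\big)$ and the $N_{\text{rep}}^{-1/2}$ term supplying the $-\pi - (R+\lambda-\pi)H_2\big(\tfrac{\Delta\pi}{R+\lambda-\pi}\big)$ correction.

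It then remains to justify the region $\zR$ and assemble the exponent. The inequalities $0\le\Delta\pi\le R+\lambda-\pi$, $0\le\pi\le\min(\omega,R+\lambda)$ and $0\le\lambda\le 1-R-\omega+\pi$ are exactly the conditions that the binomials above be well defined: enough off-support room for the cancellations, $p\le\min(w,k+\ell)$, and $n-k-\ell\ge w-p$. The equality constraint is the hypothesis $|\zG|=2^\ell=\Om{V^{8/5}}$ of Proposition~\ref{prop:improvement} taken tight, namely $\lambda = \tfrac45(R+\lambda)H_2\big(\tfrac{\pi/2+\Delta\pi}{R+\lambda}\big)$, which upon inverting $H_2$ reads $\pi = 2\big((R+\lambda)H_2^{-1}(\tfrac{5\lambda}{4(R+\lambda)}) - \Delta\pi\big)$. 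Combining the two exponents gives $\alpha_{\text{MMTQW}} = (\beta+\gamma)/2$, and minimising over the admissible triples $(\pi,\Delta\pi,\lambda)\in\zR$ closes the proof.

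I expect the main obstacle to be the rigorous verification of the two average-case hypotheses of Proposition~\ref{prop:improvement} in this concrete instance: that after fixing $\pi_2(f(v_{00})+f(v_{01}))$ a solution still survives with the claimed probability, which requires the $N_{\text{rep}}$ representations to distribute nearly uniformly over the $\zG_2$-values, and that the ``random $f$-values'' heuristic underpinning the $2$-sum merge counts remains valid once the supports overlap. Everything else is routine bookkeeping with the entropy estimates for binomial coefficients.
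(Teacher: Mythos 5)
Your proposal reproduces the paper's own proof essentially step for step: the same splitting of the cost into $\OOt{T_{\text{MMTQW}}/\sqrt{P_{\text{MMTQW}}}}$ with $P_{\text{MMTQW}} = \OOt{\binom{k+\ell}{p}\binom{n-k-\ell}{w-p}/\binom{n}{w}}$, the same $V=\binom{(k+\ell)/2}{p/4+\Delta p/2}$ and representation count $\binom{p}{p/2}\binom{k+\ell-p}{\Delta p}$, the same instantiation $|\zG_2|=\Tht{N_{\text{rep}}}$, $|\zG_1|\cdot|\zG_2|=\Tht{2^{\ell/2}}$, $|\zG|=\Tht{V^{8/5}}$ feeding Proposition \ref{prop:improvement}, and the same tight constraint giving $\pi = 2\left((R+\lambda)H_2^{-1}\left(\frac{5\lambda}{4(R+\lambda)}\right)-\Delta\pi\right)$; even your closing caveat about the uniform spread of representations over $\zG_2$ flags exactly the heuristic the paper also leaves implicit. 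One remark: your correction term $-(R+\lambda-\pi)H_2\left(\frac{\Delta\pi}{R+\lambda-\pi}\right)$ matches the paper's own displayed expression \eqref{eq:TMMTQW} for $T_{\text{MMTQW}}$, whereas the theorem's stated $\beta$ writes $(1-R-\lambda)H_2\left(\frac{\Delta\pi}{1-R-\lambda}\right)$ --- an inconsistency internal to the paper (apparently a typo in the statement), not a defect of your argument.
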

\begin{proof}
The algorithm picks random subsets $\zS$ of size $k+\ell$ with the hope that the restriction to $\zS$ of the error
of weight $w$ that we are looking for is of weight $p$. 
Then it solves for each of these subsets the generalised
$4$-sum problem where the sets $\zV_{ij}$ are specified in \eqref{eq:v00v10} and \eqref{eq:v01v11}, 
and $\zG$, $\zE$, $f$ and $g$ are as  in Problem \ref{pb:shamir_shroeppel}.
$g$ is in this case slightly more complicated for the sake of analysing the algorithm.
We have 
$g(v_{00},v_{01},v_{10},v_{11})=0$ 
 if and only if (i) $v_{00}+v_{01}+v_{10}+v_{11}$ is of weight $p$ 
(this is the additional constraint we use for the analysis of the algorithm)
(ii) $f(v_{00})+f(v_{01})+f(v_{10})+f(v_{11})=\Sigma(s,H,\zS)$ and (iii) 
$h(v_{00}+v_{01}+v_{10}+v_{11})$ is of weight $w$.

From \eqref{eq:T_quant_ISD} we  know that the quantum complexity is given by
\begin{equation}
\label{eq:complexityMMTQW}
\tilde O\left(\frac{T_{\text{MMTQW}}}{\sqrt{P_{\text{MMTQW}}}}\right)
\end{equation}
where $T_{\text{MMTQW}}$ is the complexity of the combination of Grover's algorithm and quantum walk
solving the generalised $4$-sum problem specified above and $P_{\text{MMTQW}}$ is the probability
that the restriction $e'$ of the error $e$ to $\zS$ is of weight $p$
and that this error can be written as $e' = v_{00}+v_{01}+v_{10} + v_{11}$ where the $v_{ij}$ belong to 
$\zV_{ij}$.
It is readily verified that 
$$
P_{\text{MMTQW}} = \OOt{\frac{ \binom{k+\ell}{p} \binom{n-k-\ell}{w - p}}{\binom{n}{w}}}
$$
By using asymptotic expansions of the binomial coefficients we obtain
\begin{equation}
\label{eq:PMMTQW}
(P_{\text{MMTQW}})^{-1/2} = \OOt{2^{\frac{H_2(\omega) - (1-R-\lambda)H_2\left(\frac{\omega - \pi}{1 - R - \lambda}\right) - (R+\lambda)H_2\left(\frac{\pi}{R+\lambda}\right) }{2} n}}
\end{equation}
where $\lambda \eqdef \frac{\ell}{n}$ and $\pi \eqdef \frac{p}{n}$. To estimate $T_{\text{SSQW}}$, we can use Proposition \ref{prop:improvement}.
The point is that the number of different solutions of the generalised $4$-sum problem 
(when there is one) is of order 
$$
\Omt{\binom{p}{p/2} \binom{k+\ell-p}{\Delta p}}.
$$
At this point, we observe that
$$
\log_2\left(\binom{p}{p/2} \binom{k+\ell-p}{\Delta p}\right) = p+ (k+\ell-p)H_2\left( \frac{\Delta p}{k+\ell -p} \right) +o(n)
$$
when $p$, $\Delta p$, $\ell$, $k$ are all linear in $n$.
In other words, we may use Proposition $\ref{prop:improvement}$ with 
$\zG_2 = \Ft^{\ell_2}$ with 
\begin{equation}
\label{eq:ell2}
\ell_2 \eqdef p+ (k+\ell-p)H_2\left( \frac{\Delta p}{k+\ell -p} \right).
\end{equation}
We use now Proposition \ref{prop:improvement}
with $\zG_2$ chosen as explained above. 
$V$ is given in this case by
$$
V = \binom{\frac{k+\ell}{2}}{\frac{p}{4}+\frac{\Delta p}{2}} = \OOt{2^{\frac{(R+\lambda) H_2\left(\frac{\pi/2 + \Delta \pi}{R + \lambda}\right)n}{2}}}
$$
where $\Delta \pi \eqdef \frac{\Delta p}{n}$.
We choose the size of $\zG$ such that
\begin{equation}\label{eq:G}
|\zG| = \Tht{V^{8/5}}
\end{equation}
which gives
$$
2^\ell = \Tht{  {\binom{\frac{k+\ell}{2}}{\frac{p}{4}+\frac{\Delta p}{2}}}^{8/5}}.
$$
This explains why we impose 
$$
\lambda = \frac{8}{5}\frac{R+\lambda}{2} H_2\left(\frac{\pi/2+\Delta \pi}{R+\lambda}\right)
$$
which is equivalent to the condition 
$$
\frac{5 \lambda}{4(R+\lambda)} = H_2\left(\frac{\pi/2+\Delta \pi}{R+\lambda}\right)
$$
which in turn is equivalent to the condition
\begin{equation}\label{eq:condition2MMT}
\pi = 2\left((R+\lambda)H_2^{-1}\left(\frac{5\lambda}{4(R+\lambda)}\right) - \Delta \pi\right)
\end{equation}
found in the definition of the region $\zR$.
The size of $\zG_1$ is chosen such that
\begin{equation}\label{eq:condition1MMT}
|\zG_1| \cdot |\zG_2| = \Ft^{\lceil \frac{\ell}{2} \rceil}.
\end{equation}
By using \eqref{eq:ell2} and \eqref{eq:G}, this implies
\begin{eqnarray}
|\zG_1| & = & \Tht{\frac{V^{4/5}}{2^{p+ (k+\ell-p)H_2\left( \frac{\Delta p}{k+\ell -p} \right)}}}
\end{eqnarray}

With the choices \eqref{eq:condition1MMT} and \eqref{eq:condition2MMT}, 
we obtain
\begin{eqnarray}
T_{\text{MMTQW}} &= &\OOt{|\zG_1|^{1/2} \cdot V^{4/5}} \nonumber \\
& = & \OOt{\frac{V^{6/5}}{2^{\frac{p}{2}+ \frac{k+\ell-p}{2} H_2\left( \frac{\Delta p}{k+\ell -p} \right)}}} \nonumber \\
& = & \OOt{ 2^{\left[\frac{3}{5}(R+\lambda)H_2\left( \frac{\pi/2 +\Delta \pi}{R+\lambda} \right)- \frac{\pi}{2} - \frac{R+\lambda-\pi}{2} H_2\left( \frac{\Delta \pi}{R+\lambda -
\pi} \right) \right]n}}\label{eq:TMMTQW}
\end{eqnarray}
Substituting for $P_{\text{MMTQW}}$ and $T_{\text{MMTQW}}$ the expressions given by \eqref{eq:PMMTQW} and \eqref{eq:TMMTQW} finishes the proof of the
theorem.$~\qed$
\end{proof}

\section{Computing the complexity exponents	}
We used the software SageMath to numerically find the minima giving the complexity exponents in Theorems
\ref{th:expSSQW} and \ref{th:expMMTQWdiese} using golden section search and a recursive version thereof for two parameters.
We compare in Figure \ref{fig:complexities} the exponents $\alpha_{\text{Bernstein}}(R,\omega_{\text{GV}})$, $\alpha_{SSQW}(R,\omega_{\text{GV}})$ 
and $\alpha_{MMTQW}(R,\omega_{\text{GV}})$ that we have obtained with our approach. It can be observed that there is some improvement upon 
$\alpha_{\text{Bernstein}}$ with both algorithms especially in the range of rates between 0.3 and 0.7.
\begin{figure}[h!]
    \centering
    \includegraphics[width=0.6\textwidth]{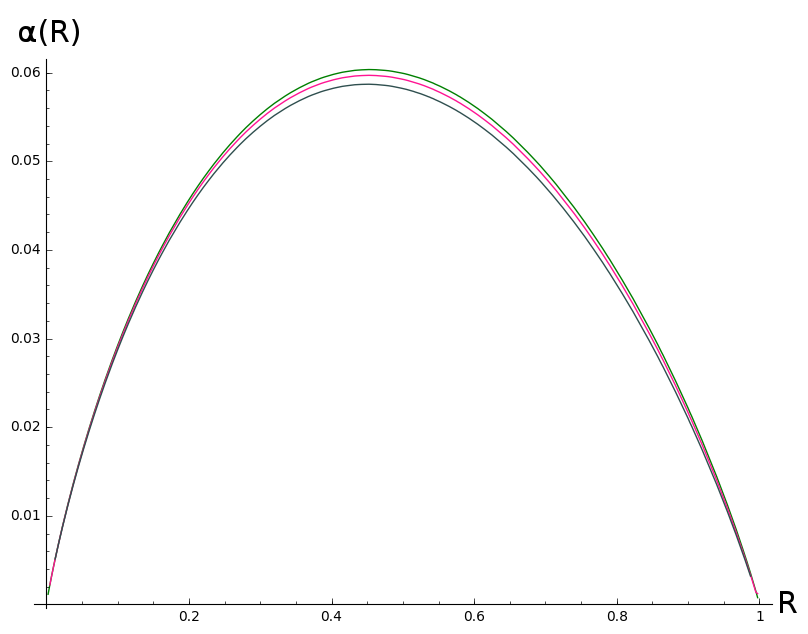}
    \caption{$\alpha_{\text{Bernstein}}$ in green, $\alpha_{SSQW}$ in pink, $\alpha_{MMTQW}$ in grey. \label{fig:complexities}}
    \label{fig:courbes}
\end{figure}
\section{Concluding remarks}
One may wonder why our best algorithm is a version of MMT's algorithm  and not BJMM's algorithm or May and Ozerov's algorithm. We did try to quantise BJMM's algorithm, but it turned out to have worse time complexity than MMT's algorithm (for more details, see \cite{K16}). 
This seems to be due to space complexity constraints. Space complexity is indeed a lower bound on the quantum complexity of the algorithm. It has actually   been shown \cite[Chap. 10, Sec. 3]{B12} that BJMM's algorithm uses more space than MMT's algorithm, even when it is optimised to use the least amount of space. Moreover, it is rather insightful that in all cases, the best quantum algorithms that we have obtained here are not direct quantised versions of the original Dumer or MMT's algorithms but quantised versions of modified versions of these algorithms that use less memory than the original
algorithms.

The case of the May and Ozerov algorithm is also intriguing. Again the large space complexity of the original version of this algorithm makes it a very challenging task to obtain a ``good'' quantised version of it.

Finally, it should be noticed that while sophisticated techniques such as MMT, BJMM \cite{MMT11,BJMM12} or May and Ozerov \cite{MO15} have managed to improve rather significantly upon the most naive ISD algorithm, namely Prange's algorithm \cite{P62}, the improvement that we obtain with more sophisticated techniques is much more modest when we consider our improvements of the quantised version of the Prange algorithm \cite{B10}. Moreover, the improvements we obtain on the exponent $\alpha_{\text{Bernstein}}(R,\omega)$ are smaller when $\omega$ is smaller than $\omega_{\text{GV}}$. Considering the techniques for proving that the exponent of classical ISD algorithms goes to the Prange exponent when the 
relative error weight goes to $0$ \cite{CS16}, we conjecture that it should be possible to prove that we actually have $\lim_{\omega \rightarrow 0^+} \frac{\alpha_{\text{MMTQW}}(R,\omega)}{\alpha_{\text{Bernstein}}(R,\omega)} = 1$.

\bibliographystyle{acm}
\addcontentsline{toc}{section}{Bibliography}
\bibliography{codecrypto}
\newpage
\appendix
\section{Proofs for Section 2}
We want to prove the following theorem.
\thmProdJohnsonGraph*
We need the following results for the proof.
\begin{theorem}[Cartesian product of $d$-regular graphs]
\label{thm:productgraphs_regular}
Let $n \in \mathbb{N}$ and $G_1, ..., G_n$ be undirected $d$-regular graphs. Then $G^n = \times_{i=1}^n G_i$ has $\prod_{i=1}^n |V_i|$ vertices and is $nd$-regular.
\end{theorem}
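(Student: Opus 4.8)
The plan is to prove both assertions by induction on $n$, treating the vertex count as an immediate consequence of the definition and reserving the (mild) work for the regularity claim. Throughout I would invoke the associativity of the graph Cartesian product, which lets me regard $G^n = \times_{i=1}^n G_i$ as unambiguously defined and, in particular, write it as $G^{n-1} \times G_n$ for the inductive step.

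First I would dispose of the vertex count. By Definition \ref{def_product_graphs} the vertex set of a binary Cartesian product is the set-theoretic Cartesian product of the two vertex sets; iterating, the vertex set of $G^n$ is $V_1 \times \cdots \times V_n$, which has $\prod_{i=1}^n |V_i|$ elements. This needs no induction beyond the obvious one on the number of factors.

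For the regularity, the base case $n=1$ is trivial: $G^1 = G_1$ is $d$-regular, i.e. $1 \cdot d$-regular. For the inductive step I assume $G^{n-1}$ is $(n-1)d$-regular and write $G^n = G^{n-1} \times G_n$. Fixing a vertex $(w,x)$ with $w \in V(G^{n-1})$ and $x \in V(G_n)$, the edge rule of Definition \ref{def_product_graphs} splits its neighbours into two families: those of the form $(w',x)$ with $w'$ adjacent to $w$ in $G^{n-1}$, and those of the form $(w,x')$ with $x'$ adjacent to $x$ in $G_n$. The first family has $(n-1)d$ elements by the induction hypothesis, and the second has $d$ elements since $G_n$ is $d$-regular.

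The only point requiring a moment's care is that these two families are disjoint: an element lying in both would satisfy $(w',x) = (w,x')$, forcing $w' = w$, which is impossible in a loopless graph where $w'$ is a genuine neighbour of $w$. Hence the degree of $(w,x)$ equals $(n-1)d + d = nd$, and since this count does not depend on the chosen vertex, $G^n$ is $nd$-regular, completing the induction. I expect this disjointness check and the bookkeeping around associativity to be the only genuine (and very minor) obstacles; everything else is a direct unwinding of the definitions.
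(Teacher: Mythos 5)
Your proof is correct, and there is nothing to compare it against in detail: the paper offers no argument at all for this statement, remarking only that ``the proof of this theorem is immediate,'' so your induction on the number of factors --- splitting the neighbours of $(w,x)$ into the two families prescribed by Definition \ref{def_product_graphs} and checking their disjointness via looplessness --- is precisely the direct unwinding of the definition that the authors leave to the reader. The disjointness observation is the only step with any content, and you handle it correctly.
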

The proof of this theorem is immediate.
\begin{theorem}[Spectral gap of product graphs]
\label{thm:spectralgap_productgraphs}
Let $G_1$ and $G_2$ be $d_1$- resp. $d_2$-regular graphs with eigenvalues of the associated Markov chain $1 = \lambda_1 \geq~...~\geq \lambda_{k_1}$ resp. $1 = \mu_1 \geq~...~ \geq \mu_{k_2}$.
Denote by $\delta_i$ the spectral gap of $G_i$, $i=1,2$.\\
Then the spectral gap $\delta$ of the product graph $G_1 \times G_2$ fulfils:
$$\delta \geq \frac{\min\left(\delta_2d_2, \delta_1d_1\right)}{d_1 + d_2}$$
\end{theorem}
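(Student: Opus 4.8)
The plan is to diagonalise the transition matrix of the product explicitly through the Kronecker (tensor) structure of the Cartesian product, and then read off the spectral gap by a short case analysis.

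First I would recall that the adjacency matrix of $G_1 \times G_2$ is the Kronecker sum $A = A_1 \otimes I_2 + I_1 \otimes A_2$, where $A_i$ is the adjacency matrix of $G_i$ and $I_i$ the identity of the appropriate size. Since $G_1 \times G_2$ is $(d_1+d_2)$-regular by Theorem~\ref{thm:productgraphs_regular}, its transition matrix is $M = \frac{1}{d_1+d_2}A$. Writing $A_1 = d_1 M_1$ and $A_2 = d_2 M_2$, this becomes the convex combination
$$M = \frac{d_1}{d_1+d_2}\,M_1 \otimes I_2 + \frac{d_2}{d_1+d_2}\,I_1 \otimes M_2.$$
If $v_i$ is an eigenvector of $M_1$ for $\lambda_i$ and $w_j$ an eigenvector of $M_2$ for $\mu_j$, then $v_i \otimes w_j$ is a common eigenvector of both summands, hence an eigenvector of $M$ with eigenvalue $\frac{d_1 \lambda_i + d_2 \mu_j}{d_1+d_2}$. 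Because $M_1$ and $M_2$ are symmetric (undirected graphs), the two summands are symmetric and commute, so the family $\{v_i \otimes w_j\}$ is a complete eigenbasis; these are therefore exactly the eigenvalues of $M$, and the value $1$ is attained only at $(i,j)=(1,1)$.

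Next I would estimate $\max_{(i,j)\neq(1,1)} \bigl|\frac{d_1\lambda_i + d_2\mu_j}{d_1+d_2}\bigr|$ by splitting into three cases according to whether $i\geq 2$, $j\geq 2$, or both. Using $|\lambda_i|\leq 1-\delta_1$ for $i\geq 2$, $|\mu_j|\leq 1-\delta_2$ for $j\geq 2$, the bounds $|\lambda_i|,|\mu_j|\leq 1$ in general, and the triangle inequality, the three cases give upper bounds $1-\frac{d_1\delta_1}{d_1+d_2}$, $1-\frac{d_2\delta_2}{d_1+d_2}$, and $1-\frac{d_1\delta_1+d_2\delta_2}{d_1+d_2}$ respectively. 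The third is dominated by the other two, so the overall maximum of the moduli is at most $1-\frac{\min(d_1\delta_1,d_2\delta_2)}{d_1+d_2}$, which yields $\delta \geq \frac{\min(d_1\delta_1,d_2\delta_2)}{d_1+d_2}$ as claimed.

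The only genuinely delicate point, and the step I would be most careful about, is the Kronecker-sum identity for $A$ together with the claim that $\{v_i \otimes w_j\}$ is a complete eigenbasis: this is where the symmetry and regularity are used to turn $M$ into a convex combination of two commuting, simultaneously diagonalisable matrices, and it is what reduces the spectral-gap estimate to elementary bookkeeping. The case analysis itself is routine; the mild subtlety there is only to remember to pass to absolute values, since the $\lambda_i$ and $\mu_j$ may be negative, so the bounds must be applied to $|\lambda_i|$ and $|\mu_j|$ rather than to the signed quantities.
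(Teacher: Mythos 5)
Your proof is correct, and its skeleton is the same as the paper's: identify the spectrum of the walk on $G_1 \times G_2$ as the weighted averages $\nu_{i,j} = \frac{d_1\lambda_i + d_2\mu_j}{d_1+d_2}$, then bound $\max_{(i,j)\neq(1,1)}|\nu_{i,j}|$ by elementary bookkeeping. The genuine difference is in how the spectral formula is obtained: the paper simply cites \cite{CDS80} (Chapter 2, Section 5, Theorems 2.23 and 2.24), whereas you derive it from the Kronecker-sum identity $A = A_1 \otimes I_2 + I_1 \otimes A_2$ and the observation that $M$ is a convex combination of the commuting symmetric matrices $M_1 \otimes I_2$ and $I_1 \otimes M_2$, so that $\{v_i \otimes w_j\}$ is a complete eigenbasis --- this makes the argument self-contained, and completeness of the eigenbasis is precisely the point the citation papers over. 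Your estimation step is also organised differently, though it lands in the same place: the paper evaluates the maximum exactly as $\max\left(d_1 + d_2\mu_2,\; d_1\lambda_2 + d_2,\; -d_1\lambda_{k_1} - d_2\mu_{k_2}\right)$, tracking the extremal eigenvalues $\lambda_2,\lambda_{k_1},\mu_2,\mu_{k_2}$, and then relaxes to $\min\left(d_2\delta_2, d_1\delta_1, d_1\delta_1 + d_2\delta_2\right)$; you instead apply the uniform bounds $|\lambda_i| \leq 1-\delta_1$ for $i \geq 2$, $|\mu_j| \leq 1-\delta_2$ for $j \geq 2$, and the triangle inequality in your three cases, obtaining the same three quantities with the third dominated. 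Your route is slightly cleaner, since it sidesteps the paper's implicit claims about where the extremal values of $d_1\lambda_i + d_2\mu_j$ are attained, and you correctly flag the need to work with absolute values. One pedantic point: your assertion that the eigenvalue $1$ is attained only at $(i,j)=(1,1)$ requires connectedness of the factors (assumed throughout the paper); in the degenerate case $\delta_1 = 0$ or $\delta_2 = 0$ the claimed inequality is vacuous anyway, so nothing is lost.
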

\begin{proof}
We first recall the following result (see \cite{CDS80}, Chapter 2, Section 5, Theorems 2.23 and 2.24):\\
The Markov chain associated to the graph $G_1 \times G_2$ has $k_1k_2$ eigenvalues which are $\nu_{i,j} = \frac{d_1\lambda_i + d_2\mu_j}{d_1 + d_2}$. In particular, $\delta = \frac{d_1 + d_2 - \max_{(i,j) \not = (1,1)}|d_1\lambda_i  + d_2\mu_j|}{d_1 + d_2}$.\\
As the eigenvalues of $G_1$ and $G_2$ are ordered from largest to smallest, we have the following:
$$\max_{i=2,...,k_1}|\lambda_i| = \max(\lambda_2,-\lambda_{k_1})$$
$$\max_{j=2,...,k_2}|\mu_i| = \max(\mu_2,-\mu_{k_2})$$
Furthermore
$$d_1\delta_1 = d_1 - d_1\max_{i=2,...,k_1}|\lambda_i| = d_1 - d_1\max(\lambda_2,-\lambda_{k_1}) \leq d_1 + d_1\lambda_{k_1}$$
$$d_2\delta_2 = d_2 - d_2\max_{j=2,...,k_2}|\mu_j| = d_2 - d_2\max(\mu_2,-\mu_{k_2}) \leq d_2 + d_2\mu_{k_2}$$
Which taken together entail\\
$$d_1 + d_2 + d_1\lambda_{k_1} + d_2\mu_{k_2} \geq d_1\delta_1 + d_2\delta_2$$
Moreover
\begin{align*}
  \max_{(i,j) \not = (1,1)}|d_1\lambda_i  + d_2\mu_j| &= \max\left(d_1\lambda_1 + d_2\mu_2, d_1\lambda_2 + d_2\mu_1, - d_1\lambda_{k_1} - d_2\mu_{k_2}\right)\\ 
  &= \max\left(d_1 + d_2\mu_2, d_1\lambda_2 + d_2, - d_1\lambda_{k_1} - d_2\mu_{k_2}\right)\\
\end{align*}
Therefore
$$(d_1 + d_2)\delta = d_1 + d_2 - \max_{(i,j) \not = (1,1)}|d_1\lambda_i  + d_2\mu_j| = \min\left(d_2 - d_2\mu_2, d_1 - d_1\lambda_2, d_1 + d_2 + d_1\lambda_{k_1} + d_2\mu_{k_2}\right)$$\\
Finally
\begin{align*}
  \delta &= \frac{d_1 + d_2 - \max_{(i,j) \not = (1,1)}|d_1\lambda_i  + d_2\mu_j|}{d_1 + d_2}\\
  &= \frac{\min\left(d_2 - d_2\mu_2, d_1 - d_1\lambda_2, d_1 + d_2 + d_1\lambda_{k_1} + d_2\mu_{k_2}\right)}{d_1 + d_2}\\
  &\geq \frac{\min\left(d_2 - d_2\max(\mu_2,-\mu_{k_2}), d_1 - d_1\max(\lambda_2,-\lambda_{k_1}), d_1\delta_1 + d_2\delta_2\right)}{d_1 + d_2}\\
  &\geq \frac{\min\left(\delta_2d_2, \delta_1d_1, d_1\delta_1 + d_2\delta_2\right)}{d_1 + d_2}\\
  &\geq \frac{\min\left(\delta_2d_2, \delta_1d_1\right)}{d_1 + d_2}
\end{align*}
$\qed$
\end{proof}
\begin{proof}[Theorem \ref{thm:product_johnsongraphs}]
  Point (1) is immediate by Theorem \ref{thm:productgraphs_regular}.\\
  Point (2) is proved using induction.\\
  Indeed, for $m=2$ we have, by Theorem \ref{thm:spectralgap_productgraphs} :
    $$\delta(J^2) \geq \frac{\delta(J)d}{2d} = \frac{1}{2}\delta(J)$$
  And for $m\geq2$, supposing that $\delta(J^m) \geq \frac{1}{m}\delta(J)$, we have, using Theorem \ref{thm:spectralgap_productgraphs} and point (1) of this theorem :
    \begin{align*}
      \delta(J^{m+1}) &\geq \frac{\min\left(md\delta(J^m),d\delta(J)\right)}{md + d}\\
      &\geq \frac{\min\left(\frac{md}{m}\delta(J),d\delta(J)\right)}{md + d}\\
      &\geq \frac{1}{m+1}\delta(J)
    \end{align*}
Point (3) just follows from the fact that $J^m(n,r)$ is regular, undirected, connected and has positive spectral gap by using the previous point.
    $\qed$	
\end{proof}

\end{document}